\newcommand{\ep}{\mathcal{E}}
\newcommand{\dualep}{\ep^\#}
\newcommand{\tr}{\mathrm{Tr}}
\newcommand{\qaoa}{\mathrm{QAOA}}
\newcommand{\ketbra}[2]{|#1\rangle \langle #2 |}
\newtheorem{definition}{Definition}
\begin{document}

\title{Dual Map Framework for Noise Characterization of Quantum Computers}

 \author{James Sud}
 \email{jsud@usra.edu}
 \affiliation{Quantum Artificial Intelligence Laboratory (QuAIL), NASA Ames Research Center, Moffett Field, CA, 94035, USA}
 \affiliation{USRA Research Institute for Advanced Computer Science (RIACS), Mountain View, CA, 94043, USA}

 \author{Jeffrey Marshall}
 \affiliation{Quantum Artificial Intelligence Laboratory (QuAIL), NASA Ames Research Center, Moffett Field, CA, 94035, USA}
 \affiliation{USRA Research Institute for Advanced Computer Science (RIACS), Mountain View, CA, 94043, USA}

 \author{Zhihui Wang}
 \affiliation{Quantum Artificial Intelligence Laboratory (QuAIL), NASA Ames Research Center, Moffett Field, CA, 94035, USA}
 \affiliation{USRA Research Institute for Advanced Computer Science (RIACS), Mountain View, CA, 94043, USA}

 \author{Eleanor Rieffel}
 \affiliation{Quantum Artificial Intelligence Laboratory (QuAIL), NASA Ames Research Center, Moffett Field, CA, 94035, USA}
 
  \author{Filip A. Wudarski}
 \email{fwudarski@usra.edu}
 \affiliation{Quantum Artificial Intelligence Laboratory (QuAIL), NASA Ames Research Center, Moffett Field, CA, 94035, USA}
 \affiliation{USRA Research Institute for Advanced Computer Science (RIACS), Mountain View, CA, 94043, USA}

\date{\today}

\begin{abstract}

In order to understand the capabilities and limitations of quantum computers, it is necessary to develop methods that efficiently characterize and benchmark error channels present on these devices. In this paper, we present a method that faithfully reconstructs a marginal (local) approximation of the effective noise (MATEN) channel, that acts as a single layer at the end of the circuit. We first introduce a dual map framework that allows us to analytically derive expectation values of observables with respect to noisy circuits. These findings are supported by numerical simulations of the quantum approximate optimization algorithm (QAOA) that also justify the MATEN, even in the presence of non-local errors that occur during a circuit. Finally, we demonstrate the performance of the method on Rigetti’s Aspen-9 quantum computer for QAOA circuits up to six qubits, successfully predicting the observed measurements on a majority of the qubits. 

\end{abstract}

\maketitle

\section{Introduction}\label{introduction}

Appropriate and accurate error characterization and benchmarking is vital for many aspects of quantum computation. Understanding dominant forms of error allows for improvements on quantum hardware, bringing these devices closer to the fault-tolerant regime, and possibly allowing for the tailoring of error correcting codes to specific error channels \cite{err-mitt-EC}. On the algorithms side, error characterization opens the possibility for error-aware algorithm design and error mitigation strategies, improving the performance of algorithms on hardware \cite{zne1, zne2}. A plethora of protocols have been designed for understanding error. These can be divided into benchmarking protocols, which aim to return numerical values that capture the rate of errors in a process (usually defined as an average fidelity \cite{nielsen02,wudarski20}), and characterization protocols, which aim to return information about both the level and form of the error channels themselves. Benchmarking protocols include randomized benchmarking \cite{emerson05, magesan11} (along with extensions such as \cite{claes21}), cycle benchmarking \cite{erhard19}, and direct fidelity estimation \cite{flammia11}. Characterization protocols include quantum process tomography \cite{chuang97}, gate set tomography \cite{greenbaum15}, Hamiltonian estimation \cite{schirmer04}, and robust phase estimation \cite{kimmel15}, as well as state preparation and measurement (SPAM) error characterization methods such as \cite{sun20, lin21, werninghaus21}. So far, benchmarking and characterization methods have suffered substantial shortcomings - either returning limited information (e.g. average fidelity for RB) or restricted to small systems due to exponential scaling (tomographic methods). In this work, we develop a characterization scheme that efficiently returns information about process matrix of the marginal noise channel acting on a single qubit. The method combines ease and efficiency of benchmarking techniques with substantially richer information content. Additionally, the introduced protocol operates without additional compilation overhead, as opposed to RB approaches, which require twirling subroutine to cast the noisy channel into a convenient form of a Pauli channel.

Quantum noise, which can lead to computational errors, are inevitable companions of quantum evolution. In order to properly describe physically admissible errors, one has to employ the framework of completely positive and trace preserving (CPTP) maps, which are referred to as error channels. These channels can be represented in numerous ways \cite{milz17,breuer02,bengtsson17}. For our purposes, the most natural representation is of the following form
\begin{equation}\label{eq:chi_map}
\ep[\rho] = \sum_{k,l=0}^{d^2-1}\chi_{k,l}P_k\rho P_l^\dag,
\end{equation}
where $P_k$ are operators, $d=2^N$ is dimensionality of the Hilbert space for $N$ qubits and $\chi$ is referred to as a process matrix. Setting $P_k$ to orthonormal basis elements (e.g. Pauli matrices), one can determine all elements of $\chi$ matrix via quantum process tomography \cite{chuang97}. In order to represent a valid quantum channel, Eq.~\eqref{eq:chi_map} has to be CPTP, which happens when $\chi\ge0$ (CP condition) and 
\begin{equation}
    \sum_{k,l=0}^{n^2-1}\chi_{k,l}P_l^\dag P_k = \mathbbm{1},\quad\quad \mathrm{(TP\ condition)}. 
\end{equation}

On noisy intermediate-scale quantum (NISQ) devices, levels of noise are too high for error correction and fault tolerance to occur. Thus, error mitigation, and error-aware algorithm co-design strategies are needed to maximize the performance of algorithms run on these devices. In order to determine these optimal mitigation and co-design strategies, it is imperative to characterize and understand error. A popular and well studied algorithm for NISQ devices is the Quantum Approximate Optimization Algorithm or Quantum Alternating Operator Ansatz (QAOA) \cite{farhi14, hogg00, hadfield19}, which aims to find approximately optimal solutions to optimization problems. In this work, we study the application of our characterization method for QAOA run on combinatorial optimization problems. The characterization and effect of local noise in QAOA circuits has been previously studied \cite{marshall20, xue21, wang2021noiseinduced,streif2021}, however here we provide analytical treatment for popular classes of error channels, as well as for a generic single-qubit noise channel. 

Given this understanding of prior work on error characterization and benchmarking, as well as the introduction of error maps and QAOA, we lay out the rest of our paper as follows. In Sec.~\ref{dualmapframework}, we introduce the framework of the dual map that is essential for analytical derivation of the expectation values of arbitrary observables in the noisy setup.
In Sec.~\ref{singlequbitnoisecharacterization}, we use these results to reverse-engineer a method to introduce a marginal approximation to the effective noise (MATEN), the core procedure of this contribution, that allows to estimate local contribution to the error channels. In Sec.~\ref{localvsnonlocalchannels}, we discuss limitations of the MATEN protocol for spatially correlated noise. Finally, in \ref{results} we demonstrate the efficacy of the method in characterizing error on classical simulations, as well as on the Aspen-9 quantum computer device from Rigetti Computing. 

\section{Dual Map Framework}\label{dualmapframework}

Given a quantum state $\rho$, an error channel $\ep$ (in the form of Eq.~\eqref{eq:chi_map}), and an Hermitian operator $O$, the noisy expectation value of $O$ with respect to $\rho$ is typically evaluated as $\mathrm{Tr} [ O \ep(\rho)]$. In this work, however, we consider the dual action of $\ep$, and compute the same expectation as 
\begin{equation}\label{eq:dual_map}
   \langle O \rangle = \mathrm{Tr} [\dualep(O) \rho] = \mathrm{Tr} [O' \rho]
\end{equation}
where $\dualep$ is the dual channel of $\ep$, defined as 
\begin{equation}
    \ep^\#[O] = \sum_{k,l=0}^{n^2-1}\chi_{k,l}P_l^\dag O P_k,
\end{equation}
which can be derived from the cyclic property of trace.

We can then study properties of the modified operator $O' = \ep^\# (O)$, which means that noise affects only the observable and not the state $\rho$. Therefore, the expectation values with respect to the ideal quantum state $\rho$ (e.g. output of a quantum circuit), can potentially benefit from the local structure of noise of the observables. In particular, if $\rho$ is a pure state (i.e. $\rho^2=\rho$), we can avoid costly simulations of the density matrix, and focus on unitary simulations and local measurements of a state vector.

Finally note that it is always possible to decompose a quantum map $\Lambda$ associated with a noisy quantum circuit as a composition $\Lambda = \mathcal{E}\circ \mathcal{U}$, where $\mathcal{U}$ is the circuit's (ideal) unitary map, and $\mathcal{E}$ a noisy channel (e.g with the trivial example $\ep = \Lambda\circ\mathcal{U}^{\dag}$). Therefore, instead of characterizing the total error map $\Lambda$ (which includes contribution from the ideal unitary), we focus on determining $\ep$, which one can perceive as an effective noise channel for the circuit (in general dependent on $\mathcal{U}$, e.g. rotational angles in QAOA). This mathematical trick, allows us to ``move'' effects of noise to the very last layer of the quantum circuit (see Fig.~\ref{fig:noise_circuit_diagram}), and exploit the dual map framework. The main advantage of using this formalism, is that many observables of interest (e.g. combinatorial or molecular Hamiltonians) can be expressed as a combination of $k$-local terms, and, as explained above, the simulation of which can be significantly more efficient in the dual map framework. We demonstrate this idea in the subsequent examples.

\begin{figure*}
    \centering
    \includegraphics[width=0.98\textwidth]{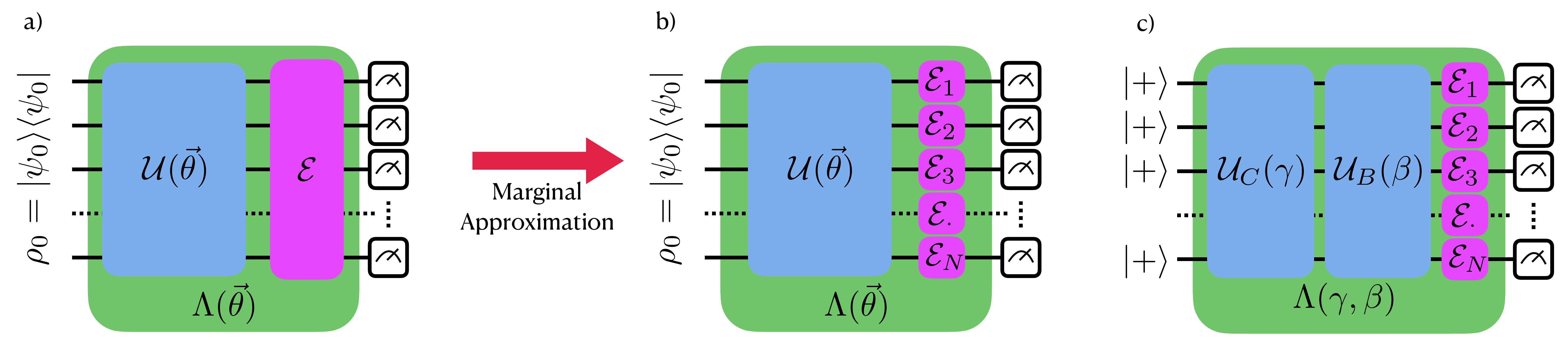}
    \caption{Operational framework for the noisy circuit characterization that is described by a quantum map $\Lambda(\vec{\theta})$, where $\vec{\theta}$ is a collection of circuit parameters. In a)
    noise is ``moved'' to the last layer (according to $\ep = \Lambda(\vec{\theta})\circ\mathcal{U}^\dag(\vec{\theta})$, and in general it is of non-local character. The non-locality is neglected in b) by approximating circuit with only local quantum channels $\ep_i$ acting on $i$-th qubit. In c) we depict the marginal approximation for a single-layered QAOA. 
    }
    \label{fig:noise_circuit_diagram}
\end{figure*}

\subsection{Example: Noisy Single-Layered QAOA}\label{noisysinglelayeredqaoa}

In this section we analyze the example of QAOA circuits, which are constructed by interleaving layers of parameterized unitaries of a mixing Hamiltonian $B$ and a phasing Hamiltonian $H$, as so

\begin{equation}\label{eq:qaoa_ansatz}
    \ket{\Psi_\qaoa(\vec{\gamma}, \vec{\beta})} = e^{-i\beta_pB}e^{-i\gamma_pH}\cdots e^{-i\beta_1B}e^{-i\gamma_1H} \ket{\psi_{\mathrm0}},
\end{equation}
where $p$ is the number of layers in the circuit, $(\vec{\gamma}, \vec{\beta})$ represent length $p$ parameter vectors, and $\ket{\psi_{\mathrm0}}$ corresponds to an initial state. Given this form, one can then choose $(\vec{\gamma}, \vec{\beta})$ such that the expectation value of $H$ is optimized (minimized or maximized) when the circuit is applied to a suitably chosen initial state. Strategies for optimizing $(\vec{\gamma}, \vec{\beta})$ \cite{streif20, zhou20, shaydulin19, brandao18} as well as choosing optimal starting states and mixing Hamiltonians \cite{hadfield19, sack21, wang20} have been intensively analyzed. In the original formulation, and most applications of QAOA, the cost function is classical, ensuring that its corresponding Hamiltonian  consists of Pauli terms only containing the Pauli $Z$ operators. For this section, we restrict to QAOA applied to the well-studied form of quadratic unconstrained binary optimization (QUBO), with cost functions given by a Hamiltonian of the form \begin{equation}\label{eq:localQAOAop}
    H = H_1 + H_2 = \sum_i h_i Z_i + \sum_{ij}J_{ij} Z_i Z_j.
\end{equation}
Many popular combinatorial optimization problems can be cast into QUBO form \cite{glover19}. In order to understand the effects of various noise channels on specific problems under certain noise assumptions then, it suffices to compute the action of the dual channel on Pauli terms with limited locality, and analyze how the modified $H'$ Hamiltonians relate to the original cost Hamiltonians. 

For parameterized circuits, such as QAOA, in order to perform mathematical analysis, we assume $\mathcal{E}$ is independent of the parameters $(\vec{\gamma}, \vec{\beta})$, and is a product of local channels such that we can write
\begin{equation}
    \rho(\vec{\gamma}, \vec{\beta}) = \mathcal{E} \mathcal{U}(\vec{\gamma}, \vec{\beta}) \left[ \ketbra{\psi_0}{\psi_0} \right].
\end{equation}
This assumption for QAOA circuits is visualized in Fig.~\ref{fig:noise_circuit_diagram}~c), with $\ep = \bigotimes_{i=1}^N \ep_i$ that is equivalent to MATEN (thoroughly described in the next section). A similar noise structure was considered in \cite{marshall20,xue21}. 

In the following subsections, we analyze the effects of the dual map of various common error channels on Hamiltonians of form Eq.~\eqref{eq:localQAOAop} wit one layer. For these examples, we assume the error channel is identical on each qubit in order to simplify the equations, but this assumption can straightforwardly be relaxed.

\subsubsection{Single Qubit Depolarizing Channel}\label{singlequbitdepolarizingchannel}

We first define a single qubit depolarizing channel, parameterized by a depolarization rate $p$, as
\begin{equation}\label{eq:depolarizing}
    \ep_p^i(\rho) = \frac{1+3p}{4}\rho+\frac{1-p}{4}\sum_{k=1}^3\sigma_k^i\rho \sigma_k^i,
\end{equation}
with $\sigma_1, \sigma_2, \sigma_3$ corresponding to Pauli X,Y, and Z respectively, and the indices $i$ corresponding to the qubit that Pauli operators act upon. Note that each Pauli matrix is an eigenmatrix of the depolarizing channel with eigenvalue $p$, i.e. $\ep_p(\sigma_k)=p\sigma_k$, and this map is self-dual ($\ep=\dualep$), so we also have $\ep^\#_p(\sigma_k)=p\sigma_k$.

For an $N$ qubit system we have noise channel acting on each qubit, i.e. $\ep_p = \ep_p^1\otimes\ep_p^2\otimes\ldots\otimes \ep_p^n$, where $\ep_p^i$ corresponds to a one-local channel on qubit $i$. Therefore, if $\ep_p$ acts on a $k$-local term in Hamiltonian, and we assume that $p$ is constant on all qubits, $\ep_p$ effectively multiplies this term by $p^k$. Specifically, we have
\begin{gather}
    \dualep_p(Z_i) = p Z_i, \\ 
    \dualep_p(Z_i Z_j) = p^2 Z_i Z_j.
\end{gather}
This allows us to easily identify the action of local depolarizing noise on QAOA for depth one with the noise channel applied at the end of the circuit, by moving to the dual picture
\begin{align}
    H' = \ep_p^\#(H) = p H_1 + p^2 H_2.
\end{align}
Thus, for single qubit depolarizing channels, the effect on QAOA cost operators is simply that one-qubit terms are rescaled by $p$, two-qubit terms are rescaled by $p^2$, and $k$-qubit terms by $p^k$ (although $k>2$ are not considered for QUBO problems). For a strictly 2-local problem such as MaxCut, this would mean that the cost is simply rescaled by $p^2$. For optimization purposes, this simple rescaling means that the optimal parameter settings stay unchanged.

\subsubsection{Amplitude Damping}\label{amplitudedamping}

Another common error channel is amplitude damping, given by the following map 
\begin{equation}\label{eq:amp_damp_map}
    \ep{\rho} = A_1 \rho A_1^\dag + A_2 \rho A_2^\dag,
\end{equation}
where $A_1$, $A_2$ are Kraus operators parameterized by a damping rate $\gamma$ and given by 
\begin{equation}
    A_1 = \left(\begin{array}{cc}1 & 0 \\0 & \sqrt{1-\gamma}\end{array}\right)\quad,\quad A_2 = \left(\begin{array}{cc}0 & \sqrt{\gamma} \\0 & 0\end{array}\right).
\end{equation}
The action of the dual of this error channel on single and two qubit Pauli Z operators are as follows
\begin{gather}
    \ep_{\gamma}^\#(Z_i) = (1-\gamma)Z_i + \gamma I, \\ 
    \ep_{\gamma}^\#(Z_iZ_j) = (1-\gamma)^2Z_iZ_j + \gamma(1-\gamma)(Z_i+Z_j) + \gamma^2I.
\end{gather}
We can see then that the 1-local terms are simply scaled and shifted. For the 2-local terms, we get not only a scale and a shift from the first and third terms, respectively, but also an extra contribution of 1-local terms from the middle term.
We can write out the action of this channel on the general Hamiltonian given in Eq.~\eqref{eq:localQAOAop}
\begin{align}
    & H' =(1-\gamma) H_1 + \gamma \sum_i h_i + (1-\gamma)^2 H_2 \nonumber \\
    &+ \gamma^2 \sum_{i<j}J_{ij}
    +\gamma(1-\gamma)\sum_{i<j}J_{ij}(Z_i+Z_j).
\end{align}
Now the only term that is neither a scale nor a constant shift is the last term. We first note that if $h_i=0$ for all $i$ and we start in a $\mathbb{Z}_2$ symmetric state, the resultant QAOA state is $\mathbb{Z}_2$ symmetric, thus all single qubit $Z$ terms go to  zero \cite{shaydulin21}, so this added 1-local term has no effect on the observed cost function value. 

Another case where this term has a nice solution can be seen as follows: 
We can rewrite $\sum_{i<j}J_{ij}(Z_i+Z_j)$ as $\sum_i Z_i (\sum_{j \neq i} J_{ij})$.
Next, if $(\sum_{j \neq i} J_{ij}) = ah_i$ for all $i$ and for some constant $a$, then this term gives us $a H_1$, meaning that $H_1$ is rescaled by $1-\gamma+a$ instead. This occurs in some cases enumerated below
\begin{enumerate}
    \item 
    if all $h$'s and $J$'s are constant (all equal $h$, $J$, respectively):
    \begin{equation}
        \sum_i Z_i (\sum_{j \neq i} J_{ij}) = \frac{J (N-1)}{h} H_1,
    \end{equation}
    \item 
    $d$-regular graph, all $h$'s constant, all nonzero $J$'s are constant (all equal $h$, $J$, respectively):
    \begin{equation}
        \sum_i Z_i (\sum_{j \neq i} J_{ij}) = \frac{J d}{h} H_1,
    \end{equation}
    \item 
    max-$k$-colorable-subgraph \cite{wang20} ($h_i=d_i$ where $d_i$ is degree of vertex $i$, $J_{ij}=-1$ if edge $(i,j)$ exists in the graph):
    \begin{equation}
        \sum_i Z_i (\sum_{j \neq i} J_{ij}) = -\sum_i d_i Z_i = -H_1.
    \end{equation}
\end{enumerate}
Notably, if $a=-1$ as in case 3), the Hamiltonian reduces to 
\begin{equation}
    H' = (1-\gamma)^2 H + \gamma \sum_i h_i + \gamma^2 \sum_{i<j}J_{ij},
\end{equation}
where we see that the entire Hamiltonian is simply scaled and shifted.

For an analysis of the effects of other common error channels on QAOA operators, such as Pauli Channels, T1/T2 error, and overrotations, etc), please see Appendix \ref{app:error_maps}.

\section{Single Qubit Noise Characterization}\label{singlequbitnoisecharacterization}
So far we have shown how certain local noise channels affect 1-local and 2-local observables, given complete knowledge of the noise. In this section, however, we demonstrate the opposite direction, showing how to exploit the dual map framework to find a marginal approximation to the effective noise (MATEN), which is defined as follows
\begin{definition}[MATEN]
For a unitary quantum circuit $\mathcal{U}$ acting on $N$-qubits, and its noisy realization $\Lambda_U$, we call $\ep = \Lambda_U\circ\mathcal{U}^{\dag}$ an effective noise channel, that acts as the final CPTP circuit layer. Additionally we define a marginal approximation to the effective noise (MATEN) as
\begin{equation}
    \tilde{\ep} = \bigotimes_{k=1}^N \tr_{\bar{k}}\left(\ep\right) = \bigotimes_{k=1}^N \ep_k,
\end{equation}
where $\tr_{\bar{k}}(\ep) = \ep_k$ traces out all subsystem except $k$-th (see~Fig.~\ref{fig:noise_circuit_diagram} b)).
\end{definition}

In order to determine a MATEN, we express a noisy map in terms of so-called process matrix (or $\chi$ matrix), that can be in principle measured directly in a set of experiments via quantum process tomography \cite{chuang97}. The map takes the form (for a single qubit),
\begin{equation}
    \mathcal{E}_\chi(\rho) = \sum_{k,l=0}^3 \chi_{kl}\sigma_k \rho \sigma_l,
\end{equation}
where $\chi_{kl}$ are elements of the $\chi$ matrix, which in general can be expressed as
\begin{equation}\label{eq:chi_matrix}
    \chi = \left(
\begin{array}{cccc}
 p_0 & t_{0,1}+i v_{0,1} & t_{0,2}+i v_{0,2} & t_{0,3}+i v_{0,3} \\
 t_{0,1}-i v_{0,1} & p_1 & t_{1,2}-i t_{0,3} & t_{1,3}+i t_{0,2} \\
 t_{0,2}-i v_{0,2} & t_{1,2}+i t_{0,3} & p_2 & t_{2,3}-i t_{0,1} \\
 t_{0,3}-i v_{0,3} & t_{1,3}-i t_{0,2} & t_{2,3}+i t_{0,1} & p_3 \\
\end{array}
\right),
\end{equation}
with $t_{kl}$ and $v_{kl}$ representing real and imaginary parts of $\chi_{kl}$ elements, respectively. This form, combined with conditions $\chi\ge0$ and $\sum_{k=0}^3p_k=1$, guarantees that the map is completely positive and trace preserving, and is the most general for the qubit systems. Note, that in total we have 12 free parameters, and diagonalizing the $\chi$ matrix will lead to a Kraus form (note that the Kraus form is not unique). Given $\chi$ we can then evaluate the effect of the dual map on the Pauli observables 

\begin{align}
    \tilde{I} &=  I ,\label{eq:noisyi}\\
    \tilde{X} &=  (p_0+p_1-p_2-p_3)X + 4t_{01}I \nonumber \\ &+2((t_{12}-v_{03})Y+(t_{13}+v_{02}))Z, \label{eq:noisyx}\\
    \tilde{Y} &=  (p_0+p_2-p_1-p_3)Y + 4t_{02}I \nonumber \\ &+2((t_{23}-v_{01})Z+(t_{12}+v_{03}))X, \label{eq:noisyy}\\
    \tilde{Z} &=  (p_0+p_3-p_1-p_2)Z + 4t_{03}I \nonumber \\ &+2((t_{13}-v_{02})X+(t_{23}+v_{01}))Y,  \label{eq:noisyz}
\end{align}
where $\tilde{I}$,$\tilde{X}$,$\tilde{Y}$,$\tilde{Z}$ represent the noisy transformations of the Pauli operators, and $\tilde{I}=I$ due to the property that the dual of trace preserving maps are unital (i.e. $\dualep(I)=I$). We can rewrite the coefficients in each equations as $P_{AB}$, forming a vector of coefficients $\vec{P}$, so for example 
$\tilde{X} = P_{XI}I+P_{XX}X+P_{XY}Y+P_{XZ}Z$. We can also write a simple matrix $A$ that relates coefficients $P_{AB}$ to the $\chi$ matrix elements as in $\vec{P} = A\vec{\chi}$, where $\vec{\chi}$ is a 12-dimensional vector having all independent $\chi$ matrix elements (i.e. $p_k, t_{kl}$, and $v_{kl}$).

Given Eqs.~\eqref{eq:noisyi}-\eqref{eq:noisyz} we can then perform the following procedure for a parameterized circuit of interest\footnote{Extension to parameter-free circuit is straightforward, and requires only altering some gates, e.g. $X\to Z$. However, this procedure would disturb the investigated algorithm, and could serve only as a characterization protocol.}:
\begin{enumerate}
    \item Choose a set $\mathbb{S}$ of parameters to the circuit. E.g. for level-1 QAOA this corresponds to choosing $|\mathbb{S}|$ different ($\gamma_1$, $\beta_1$) pairs. 
    \item Implement the circuit on a quantum device, take many measurements in the $X$, $Y$, and $Z$ bases to approximate $\braket{\tilde{X}}$, $\braket{\tilde{Y}}$, and $\braket{\tilde{Z}}$ for each parameter setting in $\mathbb{S}$ on each qubit. Since $\braket{\tilde{I}}$ is trivial, the measurement is not needed.
    \item On a classical simulator or via analytic derivation, determine the ideal values of the $\braket{X}$, $\braket{Y}$, and $\braket{Z}$ for each parameter setting in $\mathbb{S}$ for each qubit. $\braket{I}$ is trivial to calculate. \label{idealstep}
    \item Using the ideal and noisy values of all four Pauli observables, determine the coefficients $\vec{P}$ via linear regression on Eqs.~\eqref{eq:noisyi}-\eqref{eq:noisyz} for each qubit. \label{regressionstep}
    \item Given the coefficients $\vec{P}$, along with the matrix $A$ relating $\vec{P}$ to $\vec{\chi}$ matrix elements, perform $\vec{\chi}_{pred} = A^{-1}\vec{P}$ for each qubit, where $\vec{\chi}_{pred}$ are the elements of the predicted $\chi$ matrix.
\end{enumerate}
The above protocol is visualized in the chart in Fig.~\ref{fig:protocol}.
\begin{figure*}
    \centering
    \includegraphics[width=0.99\textwidth]{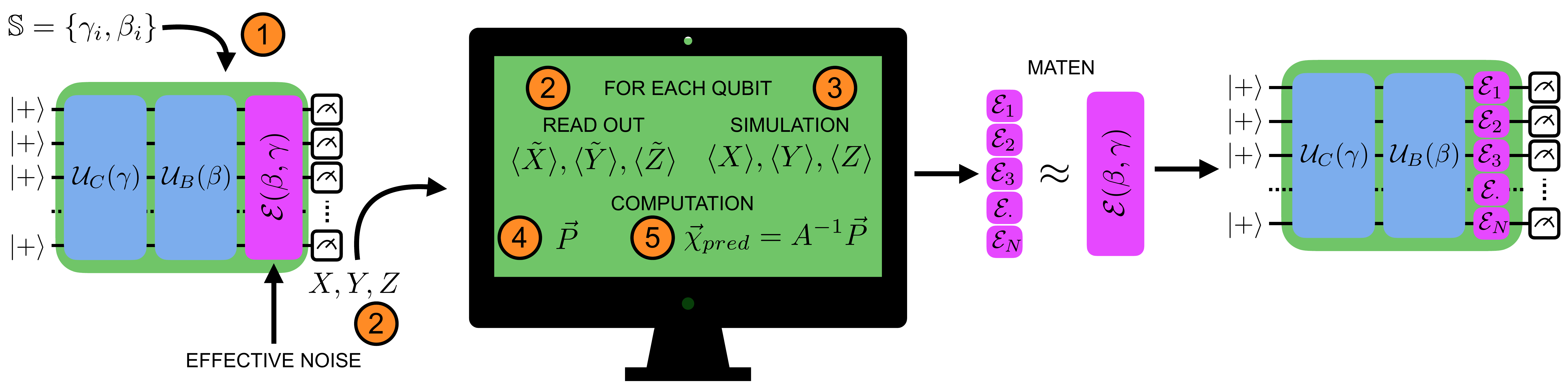}
    \caption{Protocol for characterizing local approximation to effective noise in the case of a single-layered QAOA algorithm. 1) run multiple times a circuit with different parameters from the set $\mathbb{S}$, next 2) measure all qubit registers in $X, Y$, and $Z$ bases. Use output bitstrings to infer (on a classical computer) noisy expectation values $\langle \tilde{X}\rangle, \langle \tilde{Y}\rangle$, and $\langle \tilde{Z}\rangle$. For the same set of parameters, 3) simulate ideal circuits on a classical computer to obtain expectation values.  Based on the ideal expectation values 4) construct vector $\vec{P}$, and then 5) determine $\vec{\chi}_{pred}$. Perform this procedure for each qubit in order to reconstruct the MATEN that approximates effective noise.}
    \label{fig:protocol}
\end{figure*}
We note two limitations with the presented procedure. First is that of step \ref{idealstep}, in general, it may be prohibitive 
to determine the ideal values of single qubit expectation values in simulation. However, for shallow circuits, one can use reverse light cone arguments to calculate local operator expectation values in time and memory growing exponentially with circuit depth, rather than circuit size \cite{peng20}. Additionally, if the noise channels \emph{mildly} depend on circuits \footnote{Here by mildly we mean, that noise is static, and parameter (e.g. angle) independent to the leading order.}, one could perform this characterization process on a few sets of qubits individually; this approach would work especially for shallow circuits. Finally, state of the art classical simulators can handle circuits with relatively large depth and qubit number, depending on simulation methods and computational resources. 

Second, for problems with $\mathbb{Z}_2$ symmetry, the ideal values of $\braket{Y}$ and $\braket{Z}$ vanish, so it may be impossible to fully determine $\vec{P}$, and it remains an open question if we can reliably determine nonzero elements of $\vec{P}$. If this is the case, one can derive similar equations as Eqs.~\eqref{eq:noisyi}-\eqref{eq:noisyz}, but for two-qubit operators, although this becomes much more complicated. For our analysis, we restrict to problems that lack $\mathbb{Z}_2$ symmetry. For a problem such as MaxCut, this can be achieved by simply adding single qubit $Z$ terms to the Hamiltonian. Presumably, these single qubit $Z$ gates do not introduce a significant amount of noise (on Rigetti devices, they are indeed implemented in software), so the $\chi_{pred}$ matrix should remain close to that of the original circuit. Thus, the characterized channels for these modified problems should match very well those of the original problems. One can also break this symmetry by starting in a different initial state. For QAOA problems the initial state is usually $\ket{+}^{\otimes N}$, which is $\mathbb{Z}_2$ itself, but changing $\ket{+}$ to a different non-$\mathbb{Z}_2$ symmetric state would break that symmetry.

\section{Local vs Non-Local Channels}\label{localvsnonlocalchannels}

One of the major challenges in current technology is understanding spatial correlations in noise. Whether or not noise is confined locally to a single qubit, or can be correlated across neighboring (or even distant) qubits (such as in crosstalk \cite{ash20, sarovar20}) determines the efficacy of error mitigation techniques, and quantum error correction (where errors are typically assumed to be independent). Here we aim to find out, how well one can approximate non-local noise channels with the MATEN approach. Our strategy is as follows: i) first we derive a lower bound for the worst case scenario, ii) then we numerically compute accuracy of the method for random non-local channels, iii) finally we repeat numerical analysis from ii), but for random Pauli channels and analyze some scaling properties. 

Since single-qubit $\chi$ matrix (in Pauli basis) is a positive operator of trace one, we can treat it as a 4-dimensional quantum state (with some extra constraints imposed by the structure of $\chi$). This enables us to incorporate results from the theory of quantum entanglement for the analysis of non-local channels. In particular, all the marginal states for maximally entangled states are maximally mixed states, i.e. they are proportional to the identity matrix. Therefore, the marginal approximation (MA), which on the level of $\chi$ matrix is translated to 
\begin{equation}
    \chi = \ket{\Psi}\bra{\Psi} \to \chi_{\mathrm{MA}} 
    =\bigotimes_{k=1}^N \left(\frac{1}{4}\mathbbm{1} \right) = \frac{1}{4^N}\mathbbm{1}_{4^N},
\end{equation}
also yields the maximally mixed state in the full $4^N$ dimensional space (where $N$ is the number of considered qubits), which corresponds to the fully depolarizing channel. Above we denote the non-local process matrix $\chi$ as the maximally entangled state, that is defined as a projector onto $\ket{\Psi}=\frac{1}{2}\sum_{i=0}^3\ket{ii\ldots i}$, with $N$ 4D subsystems (each corresponding to a qubit), note that this is a GHZ state \cite{greenberger2007going}. We conjecture that the effective channel with the maximally entangled $\chi$ matrix is the worst case scenario for the proposed MATEN protocol. 
Since the MATEN approach neglects all non-trivial correlations between different subsystems, and maximally entangled states exhibit the strongest correlations among quantum objects resulting in minimal knowledge of the subsystem's structure (maximally mixed state),  the protocol yields the minimum fidelity value between the marginal approximation (MA) and the full $\chi$. However, this conjecture requires more rigorous treatment, which we leave as an open problem. Note that maximally entangled $\chi$ is a completely valid choice, since $\chi\ge0$ and the map associated with it is trace preserving. 

Having established that the maximally entangled $\chi$ matrix is the limiting case for the protocol, now we determine the accuracy of this approximation. For this purpose we incorporate the fidelity of quantum states as a useful figure of merit. We compute it for the non-local $\chi$ matrix and its MA. Since, the MA gives a trivial state, one can easily compute the fidelity \cite{zyczkowski05, nielsen02}
\begin{equation}
    F\left(\chi, \chi_{MA}\right) = \frac{1}{4^{N}}\tr(\sqrt{\chi})^2=\frac{1}{4^N},
\end{equation}
where we used the fact that $\chi$ is a projector (i.e. $\sqrt{\chi}=\chi$, and $\tr(\chi) = 1$). As mentioned before, this result represents the worst case scenario, and is unlikely to happen in real experiments (especially if one is interested in low depth circuits), where hardware building blocks operate on fairly high gate fidelity (95-99\%, with lower fidelities for multi-qubit gates, and higher for single-qubit ones). Therefore, we can escape this unfavorable scaling by restricting to channels that are close to perfect (noiseless) case, i.e.~to the identity channel ($\chi_{00}=1$ and all other elements equal to zero). This also implies that non-local effects are comparably small to the leading order, which is predominantly determined by the $\chi_{00}$ element. Similar restrictions are commonly considered in benchmarking literature (see for example \cite{2020FlammiaEfficient}), since they represent noise regimes that are more relevant for the current hardware technology and help tailor error correcting schemes.  
In order to properly address this issue, we incorporate numerical methods to find out how well the MA can represent the true non-local noise process. Here, we use random sampling of full $\chi$ matrices and random samples of Pauli channels (i.e. $\chi$ matrices with a random probability vector on the diagonal and all other elements equal to zero). For the case of the full random $\chi$ processes, we explore systems composed of $N = 2, 3, 4$ qubits, while for Pauli channels we additionally look at $N = 5$. The results are displayed in~Fig.~\ref{fig:non_local_vs_MA}, where we took 10,000 samples of random channels (generated with QuTiP \cite{johansson12}), and computed all marginals of the multi-qubit $\chi$ matrix (i.e. tracing out all but one qubit) and compared fidelity between a tensor product of the marginals (essentially what we call the MA) and the non-local one.
\begin{figure*}[htb]
    \centering
    \includegraphics[width=0.48\textwidth]{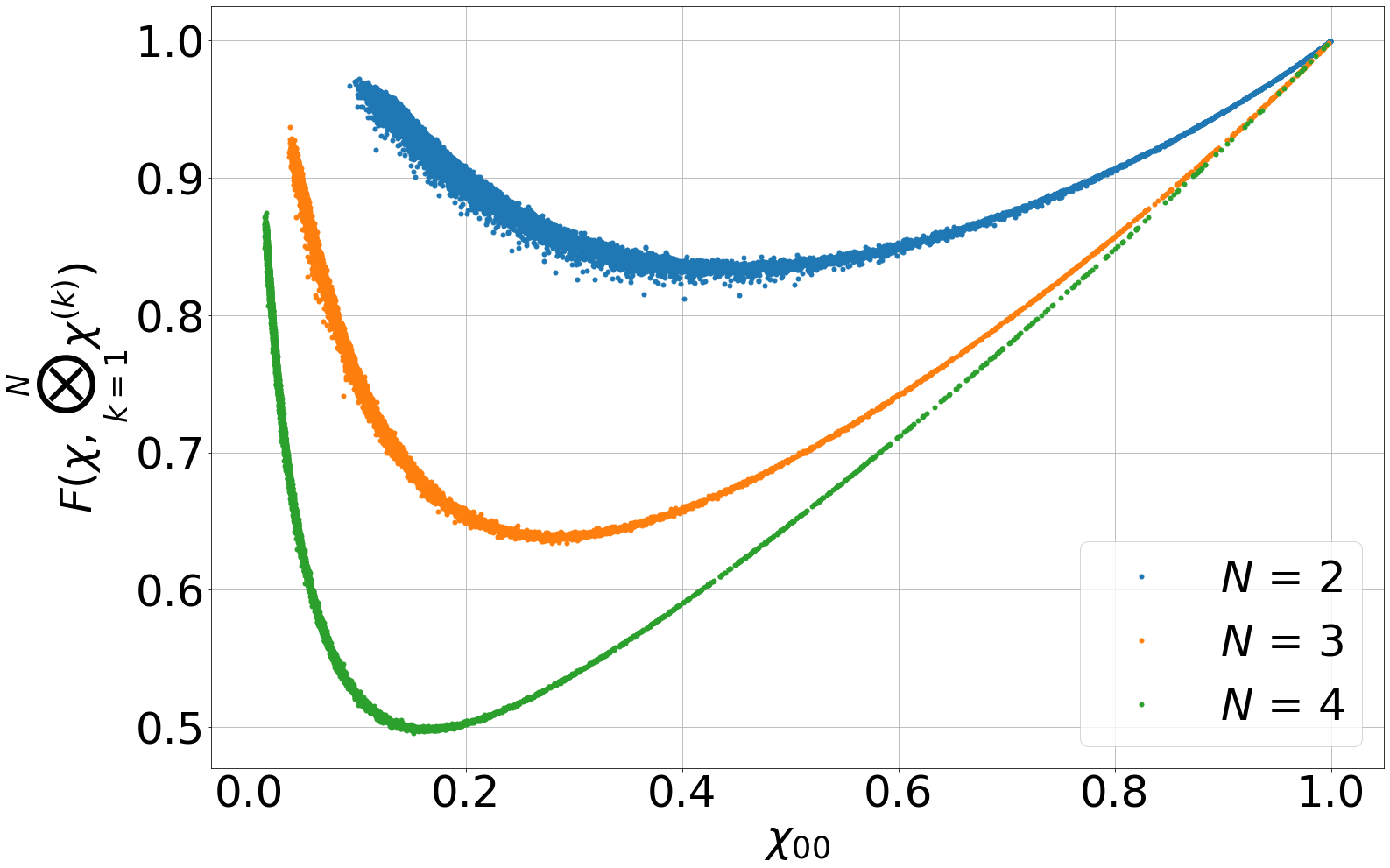}
    \includegraphics[width=0.48\textwidth]{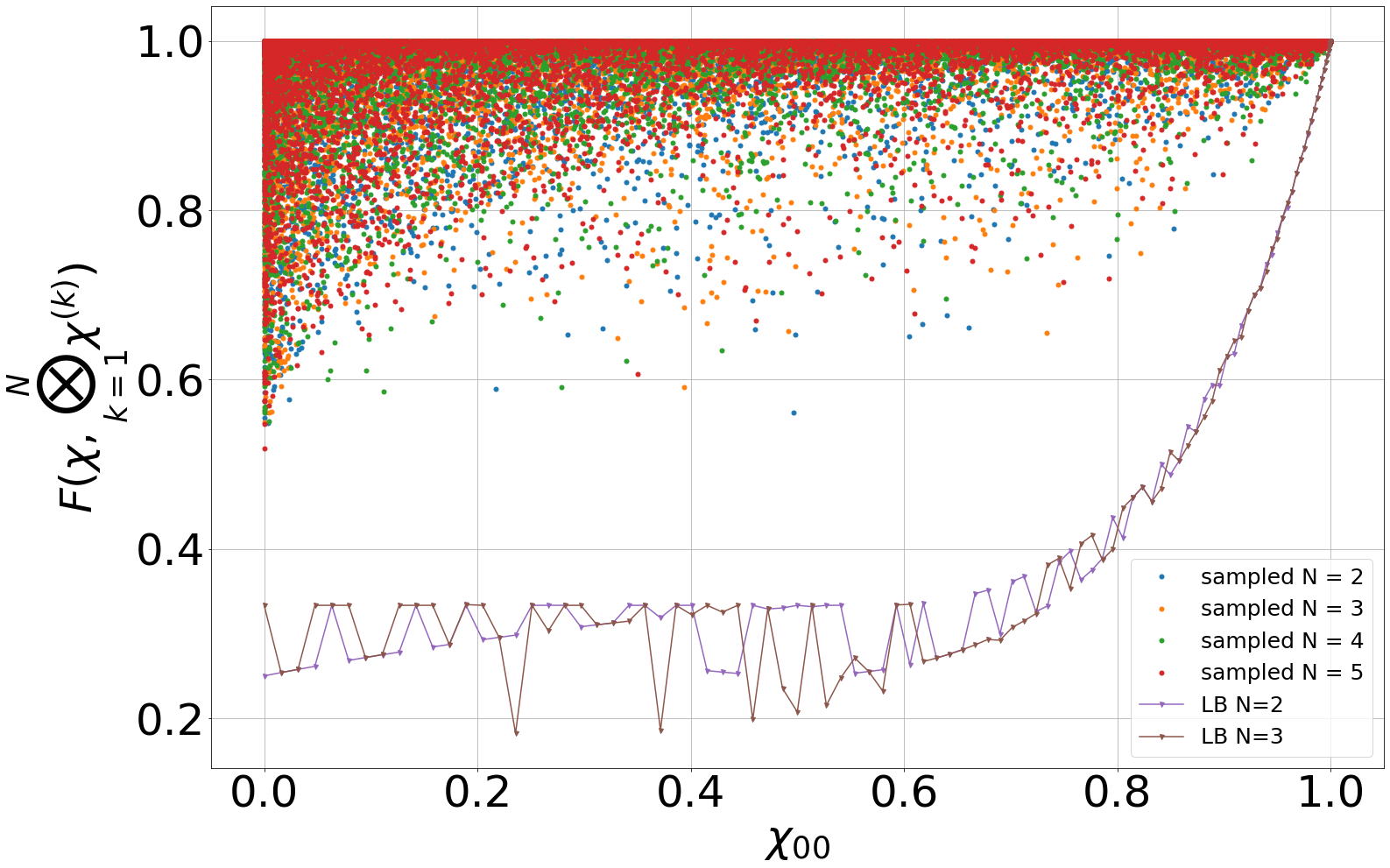}
    \caption{Comparison between random non-local $\chi$ matrix and its MA in terms of fidelity. Left plot depicts case of full random channels, while right plot restricts the analysis to random Pauli channels. Additionally, we provide a numerical lower bound (LB) on Pauli channels with two and three qubits (right plot).}
    \label{fig:non_local_vs_MA}
\end{figure*}

For random Pauli channels, we additionally numerically minimize the fidelity between the $4^N$ dimensional probability vector representing the Pauli channel, and its MA \footnote{The marginal approximation for Pauli channels is also done on the level of matrices, and not on the probability vectors.}. In order to guarantee a genuine probability distribution over our parameters (without having to impose any constraint) we use the modified Hurwitz parametrization for the probability vector \cite{Hurwitz1897,Zyczkowski2001}
\begin{align}
    \chi = &\mathrm{diag}[ \cos^2(\theta_{4^N-1}), \cos^2(\theta_{4^N-2})\sin^2(\theta_{4^N-1}),\ldots,&\nonumber  \\
    &\sin^2(\theta_1)\sin^2(\theta_2),\cdots\sin^2(\theta_{4^N-1}) ] &. 
\end{align}
We employ Sequential Least Squares Programming (SLSQP) \cite{kraft88} optimization routine to find the lower bound. Surprisingly, two and three qubit channels display similar lower bounds (in particular for high fidelity channels, i.e.~$\chi_{00}$ close to one).
The key observation is that for channels with reasonably large $\chi_{00}$ (corresponding to the identity channel), which is directly related to the gate/circuit fidelity, the MA can provide results with acceptable accuracy. Therefore, the MATEN protocol, identifies a MA that can estimate the leading order of the effective noise channel.

\section{Results}\label{results}

In this section we present the success of the method presented in \ref{singlequbitnoisecharacterization} for noisy simulations.

\subsection{Classical Simulation}\label{classicalsimulation}

For classical simulations, we test our characterization method against a variety of noise sources. Noiseless and noisy classical simulations are performed via pure state and density matrix simulations with HybridQ, an open-source hybrid quantum simulator \cite{mandra21}. In some cases, we additionally generate and apply error channels via  QuTiP \cite{johansson12} an open-source toolbox that allows for classical simulation of open quantum systems. With this capability of finding ideal and noisy states and operators, we can easily compute metrics needed to evaluate our method. For all of these experiments, we test the characterization method on parameterized QAOA circuits for QUBO problems.

\subsubsection{Purely Local Noise}\label{purelylocalnoise}
First we test the efficacy of the characterization method laid out in Sec.~\ref{singlequbitnoisecharacterization} for predicting $\chi$ matrices that we manually apply at the end of noiseless classical simulation. To do this, we pick a $\chi_{in}$ matrix by iteratively selecting elements uniformly randomly from the interval $[0,1]$ for the elements $\vec{p}$, and $[-1,1]$ for $\vec{t}$ and $\vec{v}$ in Eq.~\eqref{eq:chi_matrix}, and checking if the resultant map is physical (i.e. $\chi\ge0$) until we succeed.  We further choose the same $\chi_{in}$ matrix on each qubit, although this is relaxed in the next section. We additionally choose random QUBO problems by randomly drawing $J$ and $h$ from a uniform distribution in range $[0,1]$. In this experiment, we should expect that for some reasonable number of parameter settings (size of $\mathbb{S}$) and for a sufficient number of shots (measurements), we should be able exactly recover the input $\chi_{in}$ matrix to arbitrary precision, as the noise is taken to fit perfectly within the MATEN approximation. We quantify the accuracy of determining $\chi_{in}$ by taking the L2 distance between the elements of $\chi_{in}$ and $\chi_{pred}$, the process matrix our method predicts. The results for various values of shot number and number of regression angles are plotted in~Fig.~\ref{fig:purely_local_sim}. These plots are generated using statevector simulations for perfect evaluation of observables.

\begin{figure}[htb]
    \centering
    \includegraphics[width=0.48\textwidth]{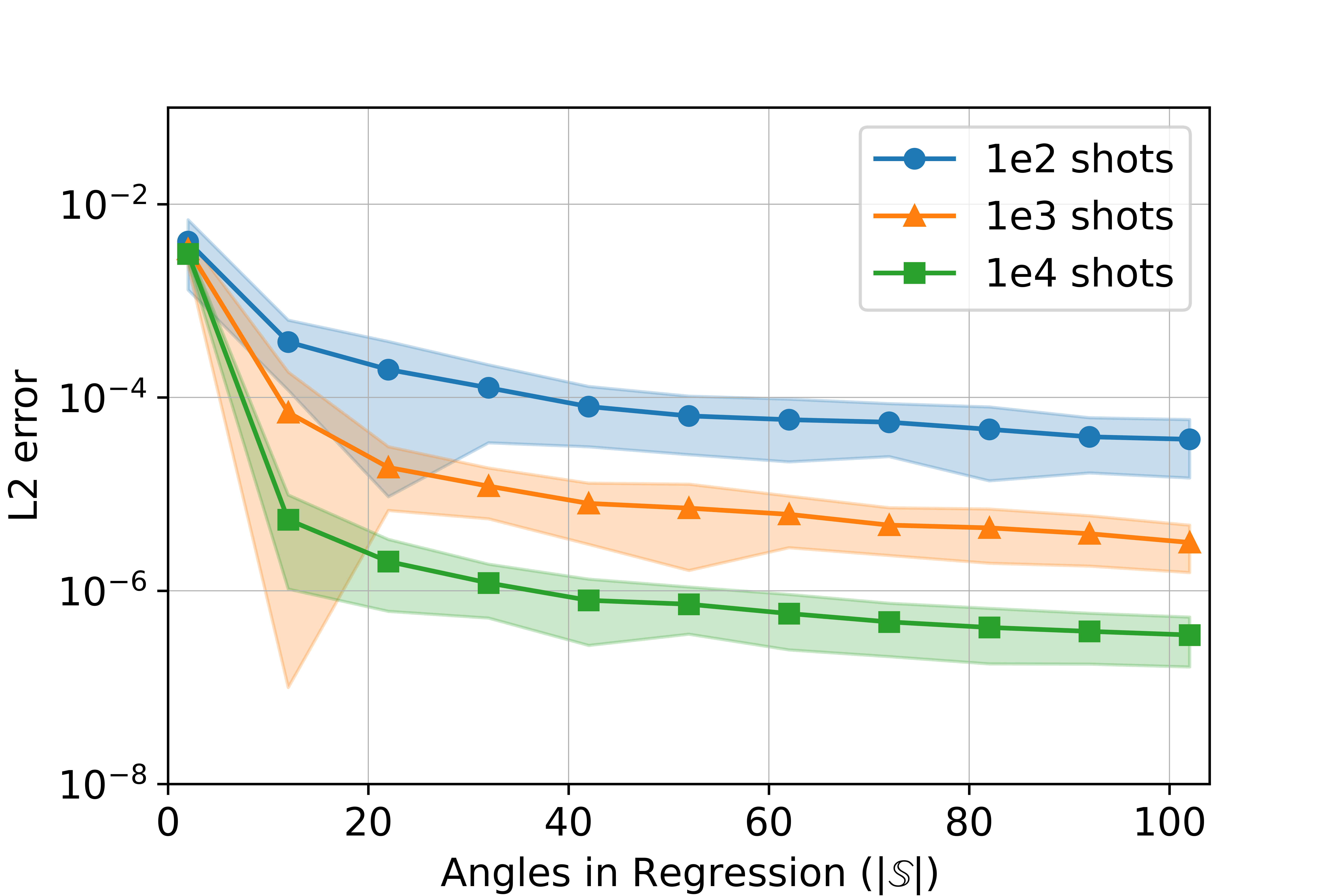}
    \caption{Average L2 distance between the true and predicted $\chi$ for classical simulations for randomly chosen two qubit QUBO instances with weights in the range $[0,1]$, as a function of the number of angles $|\mathbb{S}|$ used in the regression and the number of shots used in the estimation of expectation values. Solid lines depict the average over 100 runs, shading depicts one standard deviation above and below the average. For large number of shots and angles, the distance is below $10^{-6}$.}
    \label{fig:purely_local_sim}
\end{figure}

Indeed, we see that for a typical case, increasing the number of angles and the number of shots used in regression allows for more accurate determination of $\chi_{in}$. We further see from this figure that the L2 distance shrinks with added number of shots, by roughly a factor of $10$ when the number of shots increases by a factor of $10$. We later numerically see this roughly polynomial scaling with the number of shots for various values of $|\mathbb{S}|$. For instance, with $|\mathbb{S}|=16$ we find the L2 distance goes as $numshots^{-1.39}$. We note that the L2 distance between randomly chosen $\chi$ matrices was numerically found to be $.800 \pm .125$, but we see fidelities much higher than this value for sufficiently large $|\mathbb{S}|$ and number of shots.

\subsubsection{Non-Local Noise At End of Circuit}\label{nonlocalnoise}

In order to test the resiliency of the noise characterization procedure, we must test the method against noise models that a MATEN is not suited to perfectly capture. For the first of these models, we choose a constant error channel that exists only at the end of a quantum circuit, but is not a simple tensor product of single qubit channels. In order to apply this combination of local and non-local noise, then, we apply an error map of the form
\begin{equation} \label{eq:correlated_channel}
     \mathcal{E} = (1-c) \mathcal{E}_1^{(n)} + c \mathcal{E}_n^{(n)}, 
\end{equation}
where we have a combination of purely local channel  $\mathcal{E}_1^{(n)}$ and nonlocal channel $\mathcal{E}_n^{(n)}$ weighted by a correlation factor $c\in[0,1]$. In this section we allow the $\chi_{in}$ matrices to vary for each qubit (in $\mathcal{E}_1^{(n)}$). Since due to the addition of extra noise ($\mathcal{E}_n^{(n)}$) we no longer expect that $\chi_{pred} \approx \chi_{in}$, we no longer report the fidelity between the two. Instead, we use $(1/3)*r(\braket{X},\braket{\tilde{X}})+r(\braket{Y},\braket{\tilde{Y}})+r(\braket{Z},\braket{\tilde{Z}})$, the average Pearson correlation coefficient $r(x,y)$ between the measured and predicted expectation values of $\braket{X}$, $\braket{Y}$, and $\braket{Z}$, as these tell us how well our noise model predicts simple observables of interest on the quantum device. However, it is possible to induce overfitting, especially when the number of considered parameter settings ($|\mathbb{S}|$) is small. Thus we additionally look at correlations for an additional ``testing set'' of parameter settings. For our experiments at around $|\mathbb{S}|=50$ however, these correlations very closely matched that of the training set, so we only present correlations of the testing set for the following cases. In addition to correlation, we also use Choi fidelity, defined as $F(\Phi_1, \Phi_2)$, the state fidelity between Choi matrices $\Phi_1, \; \Phi_2$, representing respectively the entire $n$-qubit maps generated from the chosen error channels, and the predicted MATEN from our method. We present the results from the characterization of this noise model in Fig.~\ref{fig:nonlocal_sim}. For these experiments, we fix our problem Hamiltonian to a fully-connected QUBO instance with all $J=1$ and all $h=0$. Evolution and expectation values are evaluated using density matrix simulation.

\begin{figure*}[htb]
    \centering
    \includegraphics[width=0.48\textwidth]{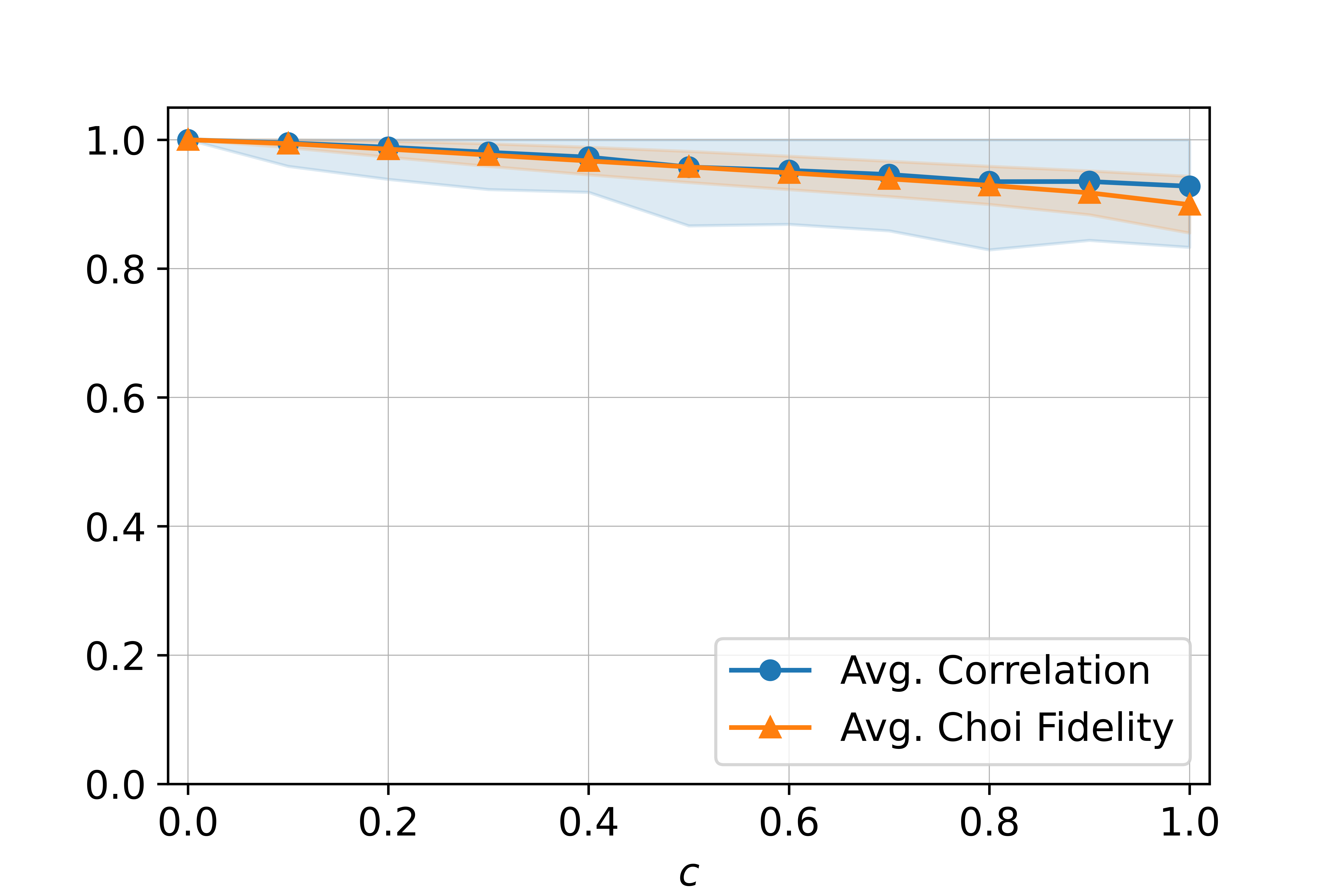}
    \includegraphics[width=0.48\textwidth]{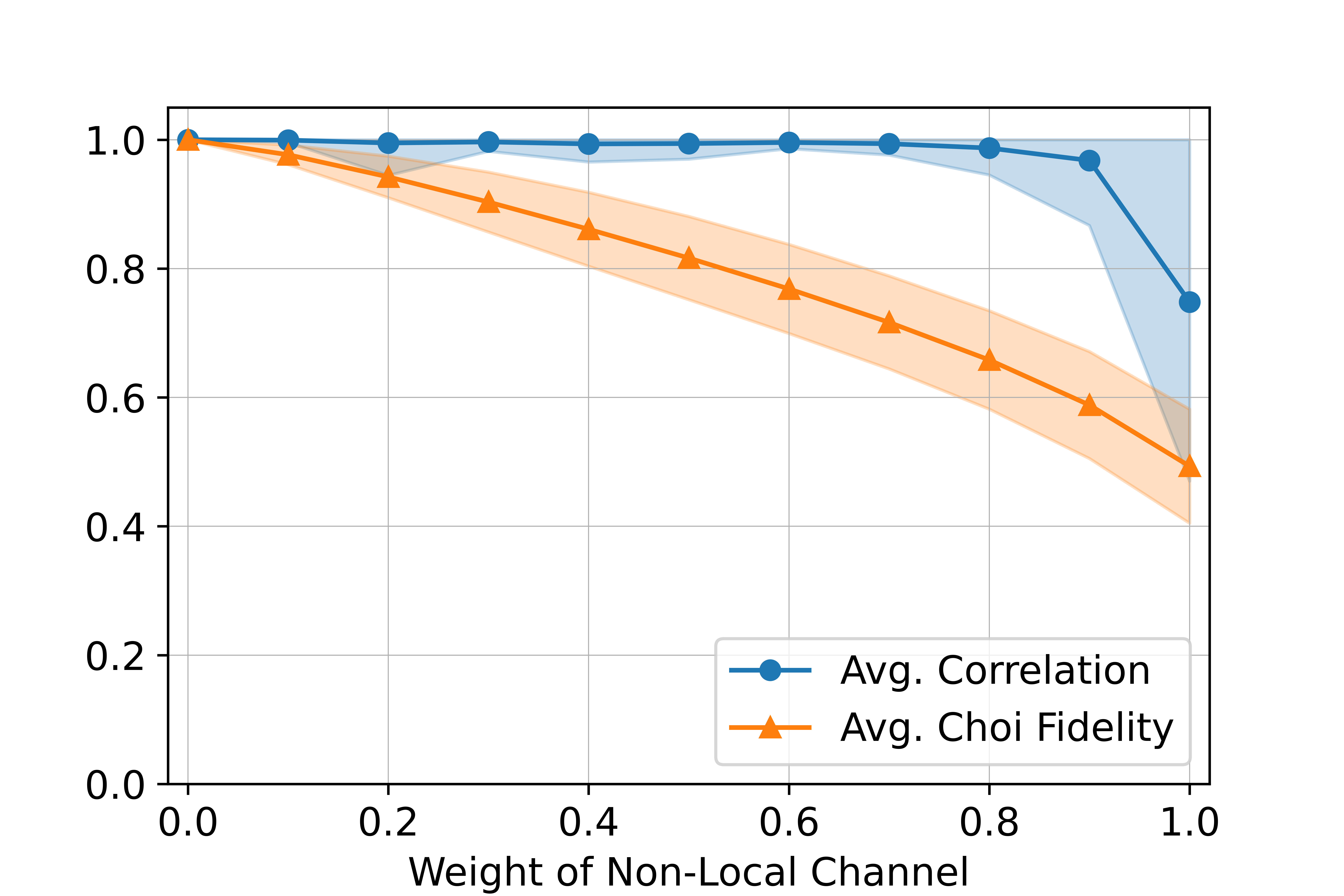}
    \caption{Average testing correlations and full state fidelities between the actual noise model and predicted noise model for two (left) and four (right) qubit fully connected QUBO problems with all $J=1$ and all $h=0$, as a function of the weight of the applied non-local channel, or $c$ in Eq.~\eqref{eq:correlated_channel}. Solid lines depict the average over 100 runs, shading depicts one standard deviation above and below the average.}
    \label{fig:nonlocal_sim}
\end{figure*}

From these simulations we see that, as expected, when $c=0$ and there is only local noise, the model works extremely well. However, as more non-local noise is added into the system, the ability to accurately predict the expectation values of Pauli observables begins to falter. At $c=1$, we typically see a sharp downturn of correlations, as at this point there we are not injecting any purely local noise to our system, thus weakening the accuracy of the MATEN.

\subsubsection{Sampling Noise}\label{samplingnoise}

In addition to the above tests, we also experimented with adding in sampling noise to our noisy simulations. To accomplish this, we choose random Gaussian perturbations with mean $0$ and standard deviation of $1/\sqrt{numshots}$ to add to all expectation value measurements, simulating the effect of sampling error on the evaluation of expectation values. Given this form of noise, we repeated analysis from above, running QAOA with cost Hamiltonian given by Eq.~\eqref{eq:localQAOAop} with two qubits and all $h=0$, $J=1$. We varied the number of shots on the x-axis, and the results of this setup are shown in Fig.~\ref{fig:twoq_nonlocal_sim_sweep}.

\begin{figure}[htb]
    \centering
    \includegraphics[width=0.48\textwidth]{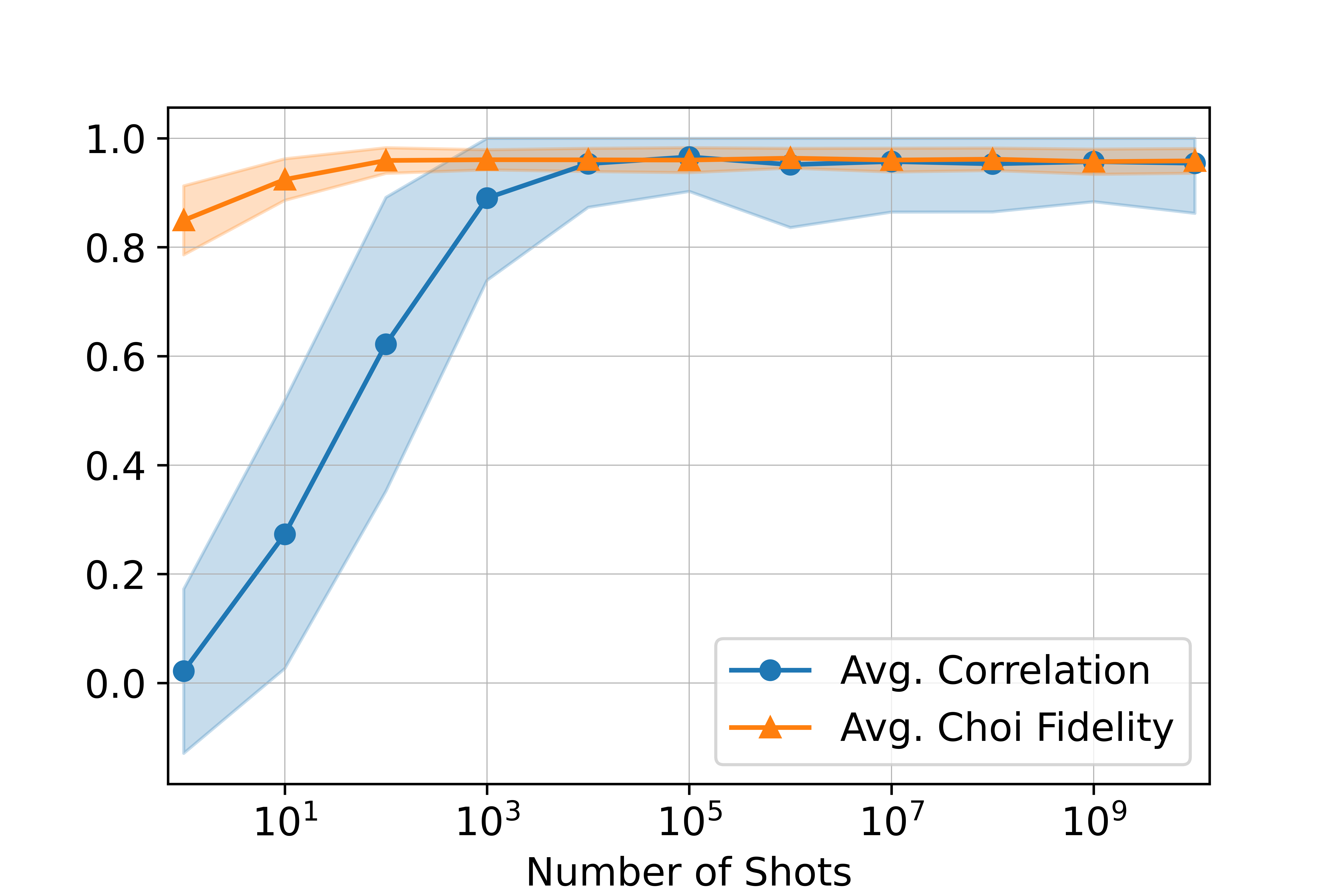}
        \caption{Average testing correlations and Choi fidelities between the actual noise model and predicted noise model for two qubit QUBO problem with $J=1$ and both $h=0$, as a function of the number of shots used to estimate expectation values. Solid lines depict the average over 100 runs, shading depicts one standard deviation above and below the average.}
    \label{fig:twoq_nonlocal_sim_sweep}
\end{figure}

From the simulations we can see that sampling noise diminishes the ability of the method to accurately fit the noisy measurements to ideal measurements, as well as predict the value of noisy measurements. The stochastic noise in causes expectation values to fluctuate between measurements, thus essentially introducing a non-constant noise model. This may cause poor performance as our method depends on having the same error channels for all angles and measurement bases. 

Additionally, we note that poor performance may arise if errors are angle-dependent, leading to an error model that is non-constant between different angles in a similar manner to sampling noise. Errors can be extremely angle-dependent on quantum computers, especially for parameterized two qubit gates such as \cite{abrams20}, so this feature could be an important limitation in the success of the method in the near term.

\subsubsection{Larger Systems: Phasing and Mixing Rotation Error}\label{rotationerror}

Finally, in order to scale our simulations to larger system sizes, we performed our characterization routine on $10$-qubit instances. For these runs, selecting and applying randomly generated $10$-qubit Kraus maps becomes numerically prohibitive, so we switch to a simpler and more realistic noise model. For these experiments, we assume that there is some stochastic error, or deviation in the parameters for both the phasing and mixing operators. In particular, the QAOA angles are assumed to be normally distributed about the desired mean value, with a non-zero standard deviation that defines the total amount of noise. For a given phasing gate $e^{-i\gamma Z_iZ_j}$, this noise is introduced through the Kraus operators
\begin{equation}
\label{eq:ZZ_noise}
    A_1 = \sqrt{1-\omega}I,\  A_2 = \sqrt{\omega}Z_iZ_j,
\end{equation}
and for a mixing gate $e^{-i\beta X_i}$ we apply 
\begin{equation}
\label{eq:X_noise}
    A_1 = \sqrt{1-\omega}I,\ A_2 = \sqrt{\omega}X_i.
\end{equation}
Here $\omega$ defines the amount of noise (related to the standard deviation in the angles' values).
A derivation of these noise models is shown in Appendix Sec \ref{non_constant_mixing_overrotations}.
This model applies the two-qubit dephasing noise layer (Eq.~\eqref{eq:ZZ_noise}) on each pair qubits the phase gates act, after the dephasing unitaries and directly before the mixing layer. After the mixing layer the one qubit $X$ noise is applied on each qubit (Eq.~\eqref{eq:X_noise}). 

Under this noise model we can test our characterization method on larger systems, and test against the assumption that all noise is applied at the end of the circuit. We display the results for the method on $10$-qubit ring and fully-connected QUBO problems in Fig.~\ref{fig:overrots}.

\begin{figure*}[htb]
    \centering
    \includegraphics[width=0.48\textwidth]{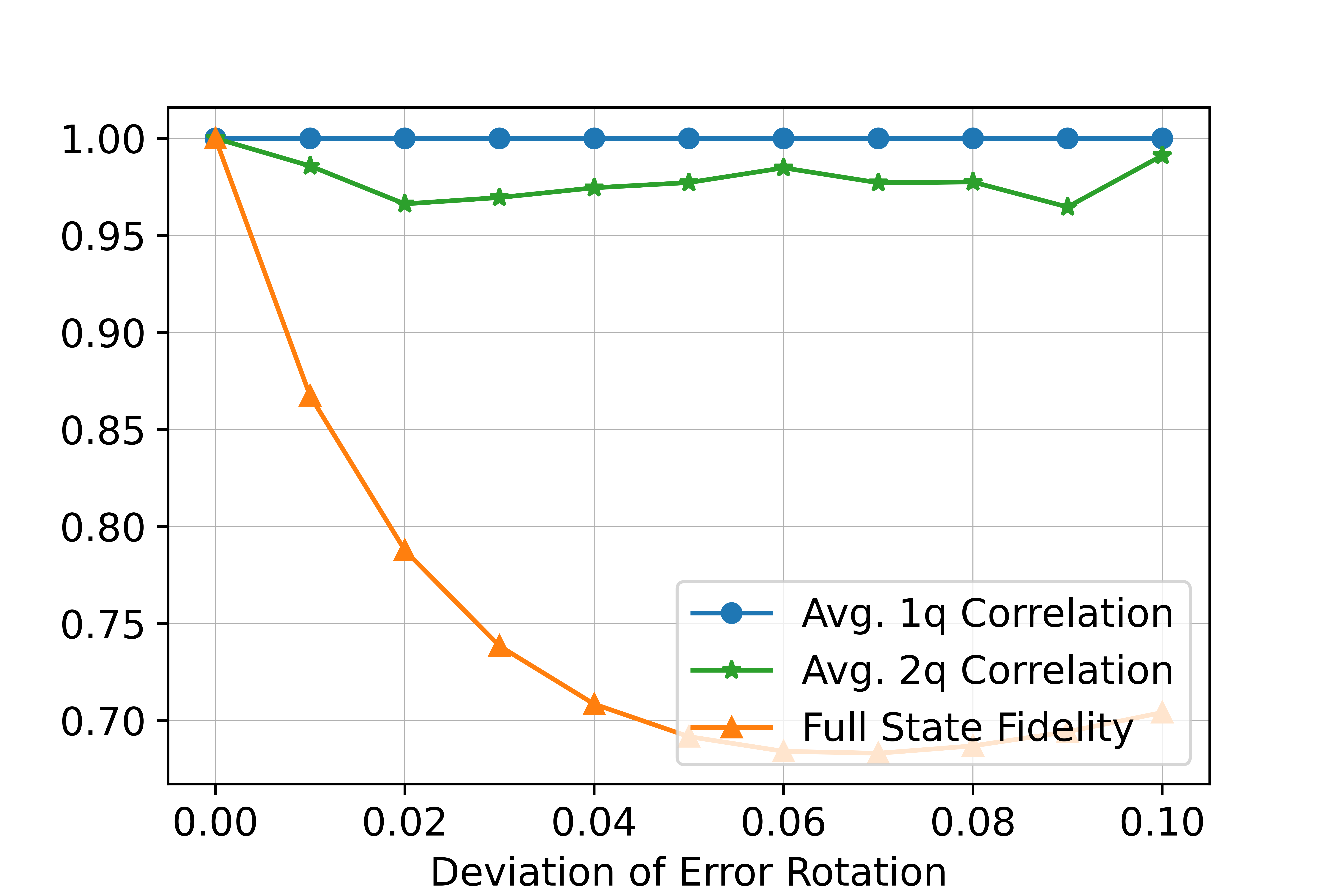}
    \includegraphics[width=0.48\textwidth]{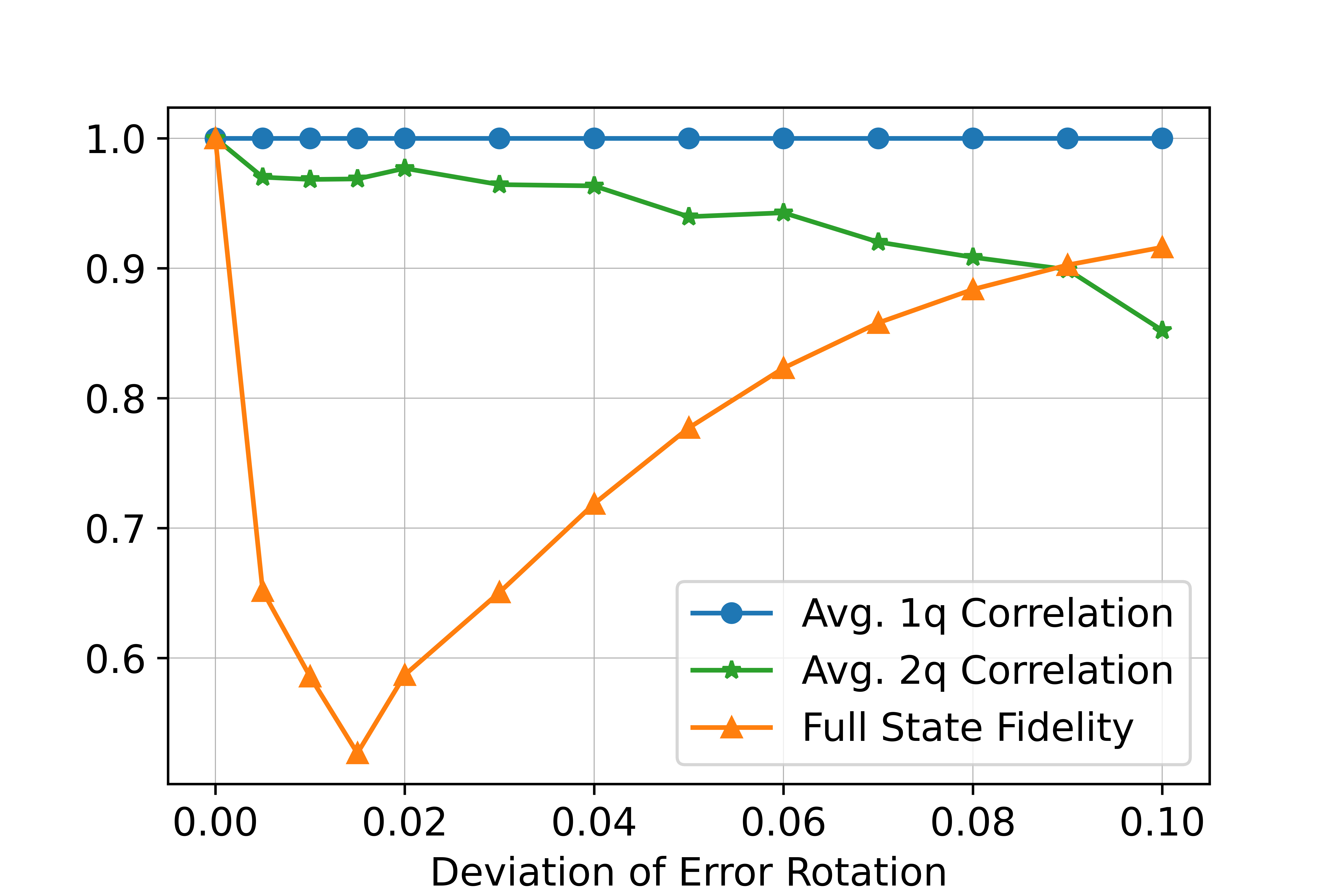}
    \caption{Average testing correlations and Choi fidelities between the actual noise model and predicted noise model for 10-qubit ring (right) and fully-connected (right) QUBO problems with all $J=1$ and all $h=0$, as a function of the deviation $\omega$ of both the phasing and mixing operators.}
    \label{fig:overrots}
\end{figure*}
For these plots, no matter the value of $\omega$, we saw that we were able to perfectly reproduce $1$-qubit correlations, so we chose to add in the average of all $2$-qubit correlations as well. Additionally, we report the average fidelity between the actual $10$-qubit noisy density matrix and the predicted density matrix using the characterized noise model. From these results we find that the fidelity drops rapidly, especially for the fully connected case. Crucially, however, the $1$ and $2$-qubit correlations remain very high, even as the $\omega$ grows. We note that on the fully connected plot, the fidelity rises after $\omega \approx .02$. This is likely explained by the fact that $\omega=0.5$ corresponds to the maximally dephasing channel, which our model can capture well. Thus we expect to see the fidelity drop initially as $\omega$ grows, then rise back to $1$ when $\omega=.5$, and then follow a symmetric pattern once $\omega>.5$. From these results, however, our main takeaway is that even in the presence of noise which is not local and not strictly at the end of the circuit, the method finds a suitable MATEN approximation that is able to replicate single-qubit expectation values perfectly and two-qubit expectation values very well, even as we scale to large system sizes.

\section{Characterization of Rigetti's Aspen-9 Device}\label{characterizationdevice}
In this section we apply the error characterization method from Sec.~\ref{singlequbitnoisecharacterization} to the Aspen-9 Quantum Processing Unit (QPU) from Rigetti Computing \cite{rigetti}. We run the characterization procedure for QAOA circuits with phase separation given by Hamiltonians of the form in Eq.~\eqref{eq:localQAOAop}, with all $h_i=1$ (to break $\mathbb{Z}_2$ symmetry) and $J_{ij}=\delta_{i+1, j}$ (forming a line topology), and implemented using a single CPHASE($\gamma$) gate, and with mixing via the standard X-mixer. These experiments were run at $N=2$ and $N=6$ with $|\mathbb{S}|=100$, where $N$ is the number of qubits and $|\mathbb{S}|$ is the number of different parameter settings used. For these experiments, we run under three cases.
\begin{enumerate}[label=\alph*)]
    \item (1q only) Remove all two-qubit (CPHASE) gates (equivalent to setting all $J_{ij}$ to $0$). The intention of this is to make sure that our method works when only single qubit gates are present, removing main sources of crosstalk and non-local noise, which could distort the results. 
    \item (2q idle) Add back in two qubit (CPHASE) gates, but set the angles ($\gamma$) of all two-qubit gates to $0$ (again equivalent to setting all $J_{ij}$ to $0$). This ideally implements the same circuit as the previous case, but two-qubit gates are physically implemented in the circuit.
    \item (2q active) Lift the restriction of setting two-qubit gate angles to $0$, thus performing the method completely as intended.
\end{enumerate}

For these experiments, much like Sec.~\ref{nonlocalnoise}, we present statistics on the correlations between predicted and observed Pauli expectation values. These are shown for both the two and six qubit cases in table \ref{tab:qpu_2q}.
\begin{table}[htb!]
\centering
\begin{tabular}{ | c || c | c | c|} 
\hline
  qubit & 1q only & 2q idle & 2q active  \\
 \hline
 34 & 0.9964 & 0.5154 & 0.9622 \\ 
 35 & 0.9982 & 0.7758 & 0.9704 \\

 \hline 
\end{tabular}
\caption{Correlations for the method performed on QAOA circuits for qubits 34 and 35 on Rigetti Aspen-9 device. The low values for the 2q idle case are likely explained by day-to-day changes in qubit calibration and error sources on the device.}
\label{tab:qpu_2q}
\end{table}

For the two-qubit experiments, we see that the method is able to predict expectation values of all Pauli observables with a high correlation to the experimental values. We note that there is low fidelity for the ``2q idle" case. This is likely explained by the fact that this case was run a few days after the other experiments, as this experiment idea was conceived after running the ``1q only" and ``2q active" cases. Due to day-to-day changes in calibration, qubits 34 and 35 may have experienced calibration issues on the day of running. Unfortunately, the Aspen-9 device was de-commissioned before a re-run of the experiment began. This faulty qubit can result in elevated angle-dependent, nonlocal, or inter-circuit noise, which have the potential to reduce the accuracy of a MATEN.

The six-qubit experiments are presented in Table \ref{tab:qpu_6q}.

\begin{table}[htb!]
\centering
\begin{tabular}{ | c || c | c | c|} 
\hline
  qubit & 1q only & 2q idle & 2q active  \\
 \hline
 30 & 0.9963 & 0.9254 & 0.5477 \\ 
 31 & 0.9901 & 0.9041 & 0.6692 \\
 32 & 0.9923 & 0.6584 & 0.5363 \\ 
 33 & 0.9948 & 0.0480 & 0.0449 \\ 
 34 & 0.9936 & 0.8674 & 0.7855 \\ 
 35 & 0.9985 & 0.9908 & 0.9801 \\ 

 \hline 
\end{tabular}
\caption{Correlations for the method performed on QAOA circuits for qubits 30-35 on Rigetti Aspen-9 device.}
\label{tab:qpu_6q}
\end{table}

Here, we see that all metrics remain high for the ``1q only" case, but for the ``2q idle" case for qubits 32 and 33 we see a significant drop in regression score and average correlation. In the ``2q active" case, we see a further decline in the correlations of qubits 30, 31, 32, and 34. For this case, which matches most closely the type of experiments we would like to characterize, our method gives an average expectation value correlation of $.69 \pm .30$. These values are far from the ideal values of 1, but the positive correlation values suggest that the method approximately captures the dominant error channels present on the QPU. The wide variability in performance on various qubits suggest that certain qubits may have more angle-dependent noise, or may have larger sources of crosstalk, as analyzed in Sec.~\ref{classicalsimulation}. In particular, qubits 32 and 33 experience a sharp decline in correlations in both the ``2q idle/active" cases, with the correlations of qubit 33 plummeting to roughly $.05$. The correlations on qubit 33 of roughly $.05$ are additionally much lower than we see even on the right side of Fig.~\ref{fig:nonlocal_sim} or anywhere in Fig.~\ref{fig:overrots}. This indicates that the errors introduced by two-qubit gates are in a sense worse than both of these cases. We suspect this may be due to the fact that the added two qubit gate, even with all angles set to zero, may introduce some significant crosstalk between the two qubits that is far from the intended phasing operation, which the MATEN is not equipped to accurately handle. In the simulations we perform, artificially added errors come in the form of randomly chosen Kraus maps or overrotations, but the error maps on a quantum device may be of a specific, more detrimental for. Additionally, even with a two-qubit gate with angle set to $0$, it can be the case that a different unitary is applied from shot-to-shot, approaching the case of Sec.~\ref{samplingnoise}, which is the only source of noise we found to reduce correlations to such a low number. Thus we suspect that this error or shot-dependent noise may play a role in the extremely low correlations, as we would not expect to be able to accurately characterize any noise procedure that is changing over time.

\section{Discussion}\label{discussion}
In this paper we introduce the dual map framework for computing the effects of error maps on expectation values evaluated on a quantum computer. We then presented a method to compute a marginal approximation to the effective noise (MATEN) of a parameterized quantum circuit, that is efficient in terms of number of measurements needed to perform on a quantum computer and is simple to implement. We demonstrate that the method effectively computes a MATEN for local noise at the end of a circuit, and demonstrate that it can be effective even in the presence of nonlocal and inter-circuit noise, especially when the noise is only weakly correlated. We finally show that the method is effective in computing a MATEN on a few qubits of Rigetti's Aspen-9 quantum computer. 
Lower values in extracted correlations of expectation values can be inform us  that the system exhibits a fair amount of angle dependant (gate) noise, as well as errors that are absent in the theoretical model, e.g. readout or leakage to the non-computational subspace. The latter can be modelled in a similar fashion as qubits under our scheme, with the difference that the $\chi$ process matrix now needs to represent a qudit process. This, however, introduces an extra layer of complexity, which we leave for future analysis.

The error characterization method can additionally be used as a proxy for the fidelity of a gate, layer, or entire circuit, as the values of the computed $\chi$ matrices for each qubit (specifically $\chi_{00}$) quantify the difference between the ideal and noisy evolution. Furthermore, the returned $\chi$ can inform dominant sources of error, which can in turn point to particularly effective strategies from error mitigation, leading to algorithmic improvements on NISQ devices. Once dominant sources of error are determined, we leave these error-specific mitigation approaches as open problems for the reader.

The dual map framework introduced can be used to understand which error channels can be specifically detrimental for a circuit. For instance, with QAOA, we show that depolarizing noise simply flattens the energy landscape, thus it does not affect the location of optimal parameters for the algorithm. However, error sources such as amplitude damping may introduce non-trivial behavior. The characterization procedure we introduce can be used to characterize error in NISQ devices, especially for shallow circuits in which the effective noise channels are expected to be non-correlated. Overall, the dual map picture for error channels provides a simple and elegant method for researching the interplay between quantum error and algorithms in the future, and our characterization approach can significantly aid hardware-aware algorithm design on today's devices.

\section{Acknowledgements}
This material is based upon work supported by the U.S. Department of Energy, Office of Science, National Quantum Information Science Research Centers, Superconducting Quantum Materials and Systems Center (SQMS) under contract number DE-AC02-07CH11359 through NASA-DOE interagency agreement SAA2-403602. All authors appreciate support from the NASA Ames Research Center. JS, JM, ZW, FW are thankful for support from NASA Academic Mission Services, Contract No. NNA16BD14C.

\bibliography{refs} 

\onecolumngrid

\appendix
\section{More error maps}\label{app:error_maps}

In this section we derive the effects of various other error channels on the pauli $Z$ and $ZZ$ terms found in QUBO problems, in the same style as section \ref{noisysinglelayeredqaoa}. We first present the following proof to aid in the analysis:

\newtheorem{thm}{Theorem}
\begin{thm}\label{thm:odd_z_y}
For $\mathbb{Z}_2$ symmetric states, the expectation value of a Pauli string $\mathcal{P}$ is $0$ if the number of Pauli Z's plus Pauli Y's in $\mathcal{P}$ is odd.
\end{thm}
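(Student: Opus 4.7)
The plan is to leverage the definition of $\mathbb{Z}_2$ symmetry, which for a state $\rho$ means invariance under the global spin-flip operator $X^{\otimes N}$, i.e.~$X^{\otimes N}\rho X^{\otimes N} = \rho$ (equivalently, for a pure state, $X^{\otimes N}\ket{\psi}=\pm\ket{\psi}$). With this definition in hand, the proof reduces to tracking signs under conjugation of a Pauli string by $X^{\otimes N}$.

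First, I would observe that for any single-qubit Pauli $\sigma$, the conjugation rule is $X\sigma X = +\sigma$ if $\sigma\in\{I,X\}$ and $X\sigma X = -\sigma$ if $\sigma\in\{Y,Z\}$, since $X$ commutes with $I,X$ and anticommutes with $Y,Z$. Extending this tensor-factor by tensor-factor to an $N$-qubit Pauli string $\mathcal{P}$, I get
\begin{equation}
X^{\otimes N}\,\mathcal{P}\,X^{\otimes N} = (-1)^{n_Y + n_Z}\,\mathcal{P},
\end{equation}
where $n_Y$ and $n_Z$ count the number of $Y$ and $Z$ factors in $\mathcal{P}$.

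Next, I would chain this with the $\mathbb{Z}_2$ invariance of the state and the cyclicity of the trace:
\begin{align}
\langle \mathcal{P}\rangle &= \tr(\rho\,\mathcal{P}) = \tr\!\left(X^{\otimes N}\rho X^{\otimes N}\,\mathcal{P}\right) \nonumber\\
&= \tr\!\left(\rho\,X^{\otimes N}\mathcal{P} X^{\otimes N}\right) = (-1)^{n_Y+n_Z}\langle\mathcal{P}\rangle.
\end{align}
Whenever $n_Y + n_Z$ is odd, this yields $\langle\mathcal{P}\rangle = -\langle\mathcal{P}\rangle$, so $\langle\mathcal{P}\rangle = 0$, which is exactly the claim.

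There is no serious obstacle here: the argument is essentially two lines once the symmetry is stated correctly. The only subtlety worth flagging is the definition of $\mathbb{Z}_2$ symmetry itself, which in the QAOA context means invariance under the global $X^{\otimes N}$ flip (since the mixer $e^{-i\beta\sum_i X_i}$ and a phase-separator built from products of $Z$'s in even total count both commute with $X^{\otimes N}$ when $h_i=0$). I would state this invariance explicitly at the start of the proof so that the reader sees why $\ket{+}^{\otimes N}$ and the QAOA evolution fit the hypothesis, and then the sign-counting calculation above completes the argument.
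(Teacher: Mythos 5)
Your proof is correct, and it takes a genuinely cleaner route than the paper's. The paper proves the statement by explicitly expanding the $\mathbb{Z}_2$ symmetric state in computational basis strings as $\sum_l c_l(\ket{l}+\ket{\overline{l}})$, rewriting each $Y$ as $-iZX$, pushing the $X$ factors onto the basis states, and then showing that the $\ket{l^*}$ and $\ket{\overline{l^*}}$ contributions combine into a factor $\bigl(1+(-1)^{|S_Y|+|S_Z|}\bigr)$, which vanishes when the count is odd. Your argument packages the same underlying fact --- that a global bit flip anticommutes with a Pauli string precisely when $n_Y+n_Z$ is odd --- at the operator level: the identity $X^{\otimes N}\mathcal{P}X^{\otimes N}=(-1)^{n_Y+n_Z}\mathcal{P}$ together with invariance of the state and cyclicity of the trace gives $\langle\mathcal{P}\rangle=-\langle\mathcal{P}\rangle$ immediately. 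What your version buys is brevity, immunity to the notational bookkeeping that makes the paper's expansion hard to follow, and slightly greater generality: it applies verbatim to mixed states and to states in either eigenspace of $X^{\otimes N}$ (the paper's expansion only covers the $+1$ eigenspace, which is the relevant one for QAOA started from $\ket{+}^{\otimes N}$). Your closing remark about stating the symmetry as $X^{\otimes N}$-invariance up front is well taken, since the paper leaves that definition implicit.
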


\begin{proof}
We assume an operator $\mathcal{O}$ on $N$ qubits that is of the form $\prod_{i=1}^N \sigma_{p_i}$ where $p_i$ represents the Pauli that acts on qubit $i$: either X,Y,Z, or I. We can then define $S_X,S_Y,S_Z$ to be the set of qubits in which our operator is X,Y,and Z. We then look at the expectation value of the most general operator of this form
\begin{equation}
    \bra{\Psi}\mathcal{O}\ket{\Psi} = 
    \bra{\Psi} \prod_{i\in S_Y}Z_i \prod_{j\in S_Y}Y_j \prod_{k\in S_X}X_k \ket{\Psi}.
\end{equation}
Then noting that Y = -iZX and write
\begin{equation}
    -i\bra{\Psi} \prod_{i\in S_Z}Z_i \prod_{j\in S_Y}Z_jX_j \prod_{k\in S_X}X_k \ket{\Psi} =
    -i\bra{\Psi} \prod_{i\in S_Z \cup S_Y}Z_i \prod_{k\in S_X \cup S_Y}X_k \ket{\Psi}.
\end{equation}
We can then expand out $\ket{\Psi}$ in terms of bitstrings $l$
\begin{equation}
    -i\bra{\Psi} \prod_{i\in S_Z \cup S_Y}Z_i \prod_{k\in S_X \cup S_Y}X_k 
    \sum_l c_l (\ket{l}+\ket{\overline{l}}).
\end{equation}
We can then define $\ket{l^*} = \prod_{k\in S_X \cup S_Y}X_k \ket{l}$ and write
\begin{equation}
    -i\bra{\Psi} \prod_{i\in S_Z \cup S_Y}Z_i \sum_l c_l (\ket{l^*}+\ket{\overline{l^*}})
\end{equation}
\begin{equation}
     -i \sum_l c_l (\bra{\Psi} \prod_{i\in S_Z \cup S_Y}Z_i \ket{l^*} +
    (\bra{\Psi} \prod_{i\in S_Z \cup S_Y}Z_i \ket{\overline{l^*}})
\end{equation}.
Then we can note following two properties. First, since $\Phi$ is $\mathbb{Z}_2$ symmetric, $\braket{\Phi|l^*} = \braket{\Phi|\overline{l^*}}$ by definition. Also, for general state $\ket{l}$, we note that $Z_i\ket{\overline{l}}=\ket{\overline{l}}\bra{l}(-Z_i)\ket{l}$ for $i$ on any qubit. 
Using the second property we have
\begin{equation}
    -i \sum_l c_l (\bra{\Psi} \prod_{i\in S_Z \cup S_Y}Z_i \ket{l^*} +
    (\bra{\Psi} \prod_{i\in S_Z \cup S_Y}(-Z_i) \ket{\overline{l^*}}\bra{l^*}(-Z)\ket{l^*}).
\end{equation}
where $\ket{\overline{x}}$ represent the inverse of $\ket{x}$, obtained by flipping all qubits in the state.
Rewriting and then using the first property we see
\begin{gather}
    -i \sum_l c_l (\bra{\Psi} \prod_{i\in S_Z \cup S_Y}Z_i \ket{l^*} + (-1)^{S_Y+S_Z}
    \bra{\Psi} \prod_{i\in S_Z \cup S_Y} \ket{\overline{l^*}}\bra{l^*}(-Z_i)\ket{l^*}) \\
    =-i \sum_l c_l (\bra{\Psi} \prod_{i\in S_Z \cup S_Y}Z_i \ket{l^*} + (-1)^{|S_Y|+|S_Z|}
    \bra{\Psi} \prod_{i\in S_Z \cup S_Y} \ket{\overline{l^*}}\bra{l^*}Z_i\ket{l^*}) \\
    =-i \sum_l c_l (\bra{\Psi} \prod_{i\in S_Z \cup S_Y}Z_i \ket{l^*} + (-1)^{|S_Y|+|S_Z|}
    \bra{\Psi} \prod_{i\in S_Z \cup S_Y} \ket{\overline{l^*}}\bra{l^*}Z_i\ket{l^*}) \\
    =-i \sum_l c_l (\bra{\Psi} \prod_{i\in S_Z \cup S_Y}Z_i \ket{l^*} + (-1)^{|S_Y|+|S_Z|}
    \bra{\Psi} \prod_{i\in S_Z \cup S_Y} Z_i\ket{l^*}) \\
    =-i \sum_l c_l (1+(-1)^{|S_Y|+|S_Z|}) \bra{\Psi} \prod_{i\in S_Z \cup S_Y}Z_i \ket{l^*}.
\end{gather}
Where we use the first property again in the second to last step. Now we can clearly see that if there if $|S_Y| + |S_Z|$ is odd this inner product will vanish for all $l$
\end{proof}
We will reference this theorem in the following analyses

\subsection{Generic Single Qubit Channel}\label{generic_single_qubit_channel}
The action of the dual of a generic single qubit channel defined by $\chi$ in \ref{eq:chi_map} on $Z$ and $ZZ$ is as follows
\begin{align}
    &\ep^\#(Z) = p_Z Z + p_Y Y + p_X X + p_I I \label{eq:genericchannelz},\\
    &\ep^\#(ZZ) = p_Z^2 ZZ + p_Y^2 YY + p_X^2 XX + p_I^2 I  + \nonumber \\
    &+P_ZP_Y(ZY+YZ)+P_ZP_X(ZX+XZ) \nonumber \\ 
    &+P_ZP_I(ZI+IZ)+P_YP_X(YX+XY) \nonumber \\ 
    &+P_YP_I(YI+IY)+P_XP_I(XI+IX). \label{eq:genericchannelzz}
\end{align}
where $p_Z = p_0+p_3-p_1-p_2$, $p_Y = 2(t_{23}+v_{02})$, $p_X = 2(t_{13}-v_{02})$, and $p_I = 4t_{03}$.

Thus, the action on QAOA Hamiltonians of form given in Eq.~\eqref{eq:localQAOAop} is:

\begin{align}
    H' = 
    &\sum_i h_i (p_Z Z_i + p_Y Y_i + p_X X_i + P_I I) \\ \nonumber 
    &+\sum_{i<j} J_{ij} (p_Z^2 Z_iZ_j + p_Y^2 Y_iY_j + p_X^2 X_iX_j + p_I^2 I ) \nonumber \\
    &+\sum_{i,j} J_{i,j}(P_ZP_YZ_iY_j+P_ZP_XZ_iX_j+P_ZP_IZ_iI \nonumber \\
    &+P_YP_XY_iX_j+P_YP_IY_iI+P_XP_IX_iI).
\end{align}

From here various assumptions can be made.  If all h's=0, which is the case for MaxCut and strictly 2-local QUBO problems we can eliminate all terms with odd number of $Z$+$Y$ terms from Thm.\ref{thm:odd_z_y} in the appendix. We could also assume that all but $p_Z$ are small, meaning that the noise channel is relatively close to the identity, which is a condition that would be satisfied on quantum hardware with low levels of noise. This would allow us to eliminate all terms quadratic in $p_X$,$p_Y$,and $p_Z$. If we make these two assumptions we reduce to
\begin{equation}
    H' = p_Z^2H + p_Zp_Y\sum_{i,j}J_{ij}Z_iY_j.
\end{equation}
This corresponds to a simple rescaling of the Hamiltonian, plus an additional, nontrivial term, which is examined in the appendix Sec.~\ref{constant_mixing_overrotations}.

\subsection{Constant Mixing Overrotations}\label{constant_mixing_overrotations}
Take a very simple model, where the phase is applied correctly, but the mixer applied as $H_M = \sum_i \tilde{\beta}_i X_i$ where $\beta_i = \beta + \delta \beta_i$, i.e. a small over/under rotation in the x direction. This means the mixing unitary is of the form $U_M' U_M$ where $U_M = e^{-i\beta \sum_i X_i }$ and $U_M' = e^{-i \sum_i \delta \beta_i X_i}$.

This locally rotates the Hamiltonian, $H' = U_M'^\dag H U_M'$. Now use $e^{-i \theta X}Ze^{i\theta X} = \cos (2\theta) Z - \sin(2\theta) Y$.

Assuming the standard form $H=\sum_{i<j} J_{ij}Z_iZ_j$, 
\begin{equation}
\begin{split}
    & H' = \sum_{i<j} J_{ij}\cos(2\delta \beta_i)\cos(2\delta \beta_j) Z_i Z_j + \sum_{i<j} J_{ij}\cos (2\delta \beta_i) \sin(2\delta \beta_j)Z_iY_j \\
    & + \sum_{i<j} J_{ij}\sin (2\delta \beta_i \cos(2\delta \beta_j)Y_i Z_j + \sum_{i<j} J_{ij}\sin (2\delta \beta_i \sin(2\delta \beta_j)Y_iY_j. 
    \end{split}
\end{equation}

Let's average the fluctuations $\langle \cos 2\delta \beta_i \cos 2\delta \beta_j\rangle = \cos^2 (2\delta \beta)$ etc.

This gives the noise averaged Hamiltonian
\begin{equation}
    H' = \cos^2(2\delta \beta) H + \sin(4 \delta \beta)\sum_{i,j} \frac{J_{ij}}{2}Z_i Y_j + \sin^2(2\delta \beta) \sum_{i<j} J_{ij}Y_i Y_j.
\end{equation}
where in the middle sum, we now sum over all $i$ and all $j$.

We see that first, the spectrum is flattened by a factor of $\cos^2 (2\delta \beta)$, but second, the terms in $Y$ modify it in a non-trivial way. Let us look at a perturbation to order $\delta \beta$:
\begin{equation}
    H' = H + 2\delta \beta \sum_{i,j}J_{ij}Z_i Y_j + O(\delta \beta ^2).
\end{equation}

The first order correction to any energy level is
\begin{equation}
    \sum_{i,j} J_{ij} \langle E | Z_i Y_j |E\rangle = 0,
\end{equation}
using that $|E\rangle$ is just some $z$ bit string.

This suggests we must go to second order perturbation, looking at terms of the form
\begin{equation}
    \langle E_1| Y_j | E_2\rangle,
\end{equation}
where $E_2$ is a single bit-flip from $E_1$. Since the magnitude of the change depends on the energy difference $E_2 - E_1$, the correction depends strongly on the spectrum of the original problem.

Since the eigenstates of classical Hamiltonians are computational basis states, just like in the Hamming weight Hamiltonian, the shifted eigenstates are as before.
We can then again calculate:

\begin{gather}
        \braket{H'} = \sum_n  E_n \bigl\lvert \bra{m} (\frac{\ket{n} - i \epsilon \sum_k X_k \ket{n}}{\sqrt{1 + N \epsilon^2}}) \bigr\rvert^2 \label{eq:2_local_overrot}\\
        = \frac{1}{1+N\epsilon^2}(E_m + \epsilon^2 \sum_n E_n \lvert \bra{m} \sum_k X_k \ket{n}) \rvert ^2.
\end{gather}
We now note that the inner product is 1 if and only if $n$ is a bitflip away from $m$ and if $k$ is the index of the bit that is flipped. We can also calculate the energy difference between $E_m$ and $E_n$ in this case. Here we compute w.l.o.g the case where let $k$ be the vertex of highest index.
\begin{multline}
    E_m-E_n = 
    \Bigl(\sum_{\substack{i<j \\ j\neq k}}J_{ij}\braket{m|Z_iZ_j|m} + \sum_{i \neq k} J_{ik}\braket{m|Z_iZ_k|m}\Bigr) - \\ \Bigl(\sum_{\substack{i<j \\ j\neq k}}\braket{m|X_kZ_iZ_jX_k|m} + \sum_{i \neq k} J_{ik}\braket{m|X_kZ_iZ_kX_k|m}\Bigr).
\end{multline}
Then we use the fact that $X_kZ_kX_k=-Z_k$ and that $X_k$ commutes through $Z_iZ_j$. We can then cancel and add terms, giving us
\begin{equation}
    2 \sum_{i\neq k} J_{ik} \braket{m|Z_iZ_k|m}
\end{equation}
Then we can easily rearranged to see that $E_n$ = $E_m - 2 \sum_{i\neq k} J_{ik} \braket{m|Z_iZ_k|m}$. We can then plug this expression back in for $E_n$ in Eq.~\eqref{eq:2_local_overrot}:
\begin{gather}
    \frac{1}{1+N\epsilon^2}\Bigl(E_m + \epsilon^2 \bigl(\sum_k E_m - 2 \sum_{i\neq k} J_{ik} \braket{Z_iZ_k}\bigr)\Bigr)\\
    =\frac{1}{1+N\epsilon^2}\bigl(E_m + \epsilon^2 N E_m -4 \sum_{i<k} J_{ik}\braket{Z_iZ_k}\bigr)\\
    =E_m \bigl( \frac{1+(N-4)\epsilon^2}{1+N\epsilon^2} \bigr) \\
    \approx E_m (1-4\epsilon^2+4N\epsilon^4 + \mathcal{O}(\epsilon^6)).
\end{gather}

So for this case overrotations also just scales the old eigenvalues

\subsection{Non-constant Mixing Overrotations}\label{non_constant_mixing_overrotations}
We may consider that instead of having perfect angles, they demonstrate small stochastic fluctuations. We will exploit model based on von Mises distribution of angles (i.e. normal distribution on a circle) and we're looking for the maps
\begin{equation}
    \mathcal{E}_U^\alpha(\rho) = \int_0^{2\pi} \frac{e^{\kappa \cos(\varepsilon})}{2\pi I_0(\kappa)} U(\alpha + \varepsilon) \rho U^\dag(\alpha+\varepsilon)d \varepsilon,
\end{equation}
where $1/\kappa$ is variance, $I_0(\kappa)$ is modified Bessel function, and $U(\alpha + \varepsilon)$ is our mixer of phase operator set to angle $\alpha$ with fluctuation $\varepsilon$. 

The integral for mixer leads to the following map 
\begin{equation}
    \mathcal{E}_M^\beta(\rho) = \frac{1}{2}\Big(1+\frac{I_2(\kappa)}{I_0(\kappa)}\cos(2\beta)\Big) \rho + \frac{1}{2}\Big(1-\frac{I_2(\kappa)}{I_0(\kappa)}\cos(2\beta)\Big)X\rho X -\frac{i I_2(\kappa)}{I_0(\kappa)}\frac{\sin(2\beta)}{2}[X,\rho],
\end{equation}
while for a single gate of phase (assuming that we have cost Hamiltonian $H = \sum J_{ij} Z_iZ_j$), the single $ZZ(J_{ij} \gamma)$ gate (for $J_{ij}=\{+1,-1\}$) is given by
\begin{equation}
    \mathcal{E}_P^{\gamma,J}(\rho) = \frac{1}{2}\Big(1-\frac{I_2(\kappa)}{I_0(\kappa)}\Big) A_1(\gamma,J)\rho A_1^\dag(\gamma,J)+ \frac{1}{2}\Big(1+\frac{I_2(\kappa)}{I_0(\kappa)}\Big) A_2(\gamma,J) \rho A_2^\dag(\gamma,J),
\end{equation}
where
\begin{equation}
    A_1(\gamma,J) = \mathrm{Diag}\big[(1,-e^{2 i \gamma J},-e^{2 i \gamma J},1)\big] \quad, \quad A_2(\gamma,J) = \mathrm{Diag}\big[(1,e^{2 i \gamma J},e^{2 i \gamma J},1)\big].
\end{equation}

The mixer error map can also be looked at as a composition of a perfect rotation by $\beta$, then an application of $\mathcal{E}_M^\beta(\rho)$ with $\beta=0$. In this case we get a self-dual channel of form  
\begin{equation}
    \mathcal{E}_M^\beta(\rho)=\frac{1}{2}\Big(1+\frac{I_2(\kappa)}{I_0(\kappa)}\cos(2\beta)\Big) \rho + \frac{1}{2}\Big(1-\frac{I_2(\kappa)}{I_0(\kappa)}\cos(2\beta)\Big)X\rho X.
\end{equation}
This maps 
\begin{equation}
    Z \rightarrow \frac{I_2(\kappa)}{I_0(\kappa)} Z.
\end{equation}
And the analysis follows local depolarizing noise. We note that in the limit of high variance, $\kappa \rightarrow 0$, and $\frac{I_2(\kappa)}{I_0(\kappa)} \rightarrow 0$, so we have complete disorder in $Z$ ($Z \rightarrow 0)$. In the limit of 0 variance, $\kappa \rightarrow \infty$, and $\frac{I_2(\kappa)}{I_0(\kappa)} \rightarrow 1$, so there is no effect of the channel ($Z \rightarrow Z)$.

\subsection{Pauli Channel}\label{pauli_channel}
Interesting class of noisy channel is so-called Pauli channels, that have form
\begin{equation}
    \mathcal{E}(\rho) = \sum_{k=0}^3 p_k \sigma_k \rho \sigma_k,
\end{equation}
where $\sigma_0 = \mathbbm{1}$, and for $k=1,2,3$ we get Pauli X, Y, and Z, respectively. $p_k\ge0$ and $\sum_k p_k = 1$. One may interpret that a given noisy channel (e.g. bit-flip corresponding to $X$, happens with a respective probability). One can set $p_1=p_2=p_3 = (1-p)/4$ and $p_0 =(1+3p)/4$ and get depolarizing channel as in \eqref{eq:depolarizing}. 

The action of the dual of a Pauli channel on Z is:
\begin{equation}
    \mathcal{E}^\#(Z) = (p_0+p_3-(p_1+p_2))*Z,
\end{equation}
This then reduces to depolarizing noise where we set $p=p_0+p_3-(p_1+p_2)$.
The results are similar for other Pauli matrices, with the difference that for Pauli $k$ matrix (1=x, 2=y, 3=z), we will have $p_0 +p_k -(p_i + p_j)$, where i,j are the coefficients standing in front of remaining matrices.

\subsection{Phase damping}\label{phase_damping}
Phase damping has also two Krauses, with $A_1$ same as for the amplitude damping case analyzed in Sec \ref{amplitudedamping} and $A_2$ given in 
\begin{equation}
    A_2 = \left(\begin{array}{cc}0 & 0 \\0 & \sqrt{\gamma}\end{array}\right).
\end{equation}

In this case, the error maps Z to itself with no scaling or shifting, so the error channel has no effect on classical Hamiltonians.

\end{document}